\newtheorem{theorem}{Theorem}
\newtheorem{lemma}[theorem]{Lemma}
\newtheorem{claim}[theorem]{Claim}
\newtheorem*{thm*}{Theorem}
\theoremstyle{remark}
\newtheorem*{remark}{Remark}
\newtheorem*{eth-env}{Exponential Time Hypothesis}
\def\dist{\ensuremath{\text{dist}}\xspace}
\def\start{\ensuremath{\text{start}}\xspace}
\def\path{\ensuremath{\text{path}}\xspace}
\def\circle{\ensuremath{\text{circle}}\xspace}
\def\L{\ensuremath{\mathcal{L}}\xspace}
\def\K{\ensuremath{\mathcal{K}}\xspace}
\def\R{\ensuremath{\mathbb{R}}\xspace}
\def\rec{\ensuremath{\textbf{Recognition}}\xspace}
\def\stretch{\ensuremath{\textbf{Stretchability}}\xspace}
\DeclareMathOperator{\NP}{NP}
\DeclareMathOperator{\PSPACE}{PSPACE}
\newcommand{\gcirc}{\ensuremath{\mathcal{C}}}
\newcommand{\A}{\ensuremath{\mathcal{A}}}
\newcommand{\gline}{\ensuremath{\ell}}
\title{Intersection Graphs of Rays and Grounded Segments}
\author[1]{Jean Cardinal}
\author[2]{Stefan Felsner}
\author[3]{Tillmann Miltzow\footnote{supported by the ERC grant PARAMTIGHT: "Parameterized complexity and the search for tight complexity results", no. 280152.}}
\author[4]{Casey~Tompkins}
\author[5]{Birgit Vogtenhuber}
\affil[1]{Universit\'e libre de Bruxelles (ULB), 
			Brussels, Belgium \ \texttt{jcardin@ulb.ac.be}}
\affil[2]{Institut f\"ur Mathematik, Technische Universit\"at Berlin (TU), 
			Berlin, Germany \ \texttt{felsner@math.tu-berlin.de}} 
\affil[3]{Institute for Computer Science and Control,
 			Hungarian Academy of Sciences (MTA SZTAKI), Budapest, Hungary \ \texttt{t.miltzow@gmail.com}}
\affil[4]{Alfr\'ed R\'enyi Institute of Mathematics, 
			Hungarian Academy of Sciences, Budapest, Hungary \ \texttt{ctompkins496@gmail.com}}
\affil[5]{Institute of Software Technology, 
			Graz University of Technology, Graz, Austria \ \texttt{bvogt@ist.tugraz.at}}
\date{}
\begin{document}

\maketitle

\begin{abstract}
  We consider several classes of intersection graphs of line segments in the plane and prove new equality and separation results between those classes. 
  In particular, we show that:
  \begin{itemize}
  \item intersection graphs of grounded segments and intersection graphs of downward rays form the same graph class,
  \item not every intersection graph of rays is an intersection graph of downward rays, and
  \item not every intersection graph of rays is an outer segment graph.
  \end{itemize}
  The first result answers an open problem posed by Cabello and Jej\v{c}i\v{c}. The third result confirms a conjecture by Cabello.
  We thereby completely elucidate the remaining open questions on the containment relations between these classes of segment graphs.
  We further characterize the complexity of the recognition problems for the classes of outer segment, grounded segment, and ray intersection graphs. 
  We prove that these recognition problems are complete for the existential theory of the reals. 
  This holds even if a 1-string realization is given as additional input.
\end{abstract}
\sloppy


\section{Introduction}

Intersection graphs encode the intersection relation between objects in a collection.
More precisely, given a collection $\A$ of sets, the induced intersection graph has the collection $\A$ as the set of vertices, and two vertices $A,B\in \A$ are adjacent whenever $A\cap B\not= \emptyset$.
Intersection graphs have drawn considerable attention in the past thirty years, to the point of constituting a whole subfield of graph theory (see, for instance, the book from McKee and Morris \cite{MM99}).
The roots of this subfield can be traced back to the properties of interval graphs -- intersection graphs of intervals on a line -- and their role in the discovery of the linear structure of bacterial genes by Benzer in 1959~\cite{benzer59}.

We consider {\em geometric} intersection graphs, that is, intersection graphs of simple geometric objects in the plane, such as curves, disks, or segments.
While early investigations of such graphs are a half-century old~\cite{S66}, the modern theory of geometric intersection graphs was established in the nineties by Kratochv{\'{\i}}l~\cite{K91,K91a}, and Kratochv{\'{\i}}l and Matou\v{s}ek~\cite{KM91,KM94}.
They introduced several classes of intersection graphs that are the topic of this paper.
Geometric intersection graphs are now ubiquitous in discrete and computational geometry, and deep connections to other fields such as complexity theory~\cite{SSS03,S09,M14} and order dimension theory~\cite{CHOSU14,F14,CFHW15} have been established.

We will focus on the following classes of intersection graphs, most of which are subclasses of intersection graphs of line segments in the plane, or {\em segment (intersection) graphs}.

\paragraph*{Grounded Segment Graphs}
Given a \emph{grounding line} $\gline$, we call a segment $s$ a \emph{grounded segment} if one of its endpoints, called the base point, is on $\gline$ and the interior of $s$ is above $\gline$. 
A graph $G$ is a \emph{grounded segment graph} if it is the intersection graph of a collection of grounded segments (w.r.t.\ the same grounding line $\gline$).

\paragraph*{Outer Segment Graphs}
Given a \emph{grounding circle} $\gcirc$, a segment $s$ is called an \emph{outer segment} if exactly one of its endpoints, called the base point, is on $\gcirc$ and the interior of $s$ is inside~$\gcirc$. 
A graph $G$ is an \emph{outer segment graph} if it is the intersection graph of a collection of outer segments (w.r.t.\ the same grounding circle $\gcirc$).

\paragraph*{Ray Graphs and Downward Ray Graphs}
A graph $G$ is a \emph{ray graph} if it is the intersection graph of rays (halflines) in the plane.
A ray $r$ is called a \emph{downward ray} if its apex is above all other points of $r$.
A graph $G$ is a \emph{downward ray graph} if it is the intersection graph of a collection of downward rays.
It is not difficult to see that every ray graph is also an outer segment graph: consider a grounding circle at infinity.
Similarly, one can check that downward ray graphs are grounded segment graphs.

\paragraph*{String Graphs}
{\em String graphs} are defined as intersection graphs of collections of simple curves in the plane with no
three intersecting in the same point. 
We refer to these curves as strings.
In this treatment we consider only {\em 1-string graphs}, this is, string graphs such that two strings pairwise intersect at most once.
We define {\em outer $1$-string graphs} and {\em grounded $1$-string graphs} 
in the same way as for segments. 

\medskip
If clear from context, we refer to the class of ray graphs, just as rays; the class of grounded segment graphs just as grounded segments and so on.

\paragraph*{Containment and Separation}
We want to point out that all graph classes considered here are intrinsically similar. Their common features are a representation of $1$-dimensional objects in the plane ``attached'' to a common object. In case of downward ray graphs and ray graphs this common object is the ``line at infinity'' or the ``circle at infinity''  respectively.
This intrinsic similarity is reflected nicely by proofs, which are unified, elegant and simple. 

In a recent manuscript, Cabello and Jej\v{c}i\v{c} initiated a comprehensive study aiming at refining our understanding of the containment relations between classes of geometric intersection graphs involving segments, disks, and strings~\cite{CJ16,CJ16abs}.
They introduce and solve many open questions about the containment relations between various classes.
In particular, they prove proper containment between intersection graphs of segments with $k$ or $k+1$ distinct lengths, intersection graphs of disks with $k$ or $k+1$ distinct radii, and intersection graphs of outer strings and outer segments.
In their conclusion~\cite{CJ16}, they leave open two natural questions:
\begin{itemize}
\item Is the class of outer segment graphs a proper subclass of ray graphs?
\item Is the class of downward ray graphs a proper subclass of grounded segment graphs?
\end{itemize}

In this contribution, we answer the first question in the positive, thereby proving a conjecture of Cabello.
We also give a negative answer to the second question by showing that downward rays and grounded segments yield the same class of intersection graphs.
We henceforth completely settle the remaining open questions on the containment relations between these classes of segment graphs.
We summarize the complete containment relationship in the following theorem, see also Figure~\ref{fig:summary} for an illustration.
Note that Item~\ref{itm:Sergio} was proved already by Cabello and Jej\v{c}i\v{c} and Item~\ref{itm:StringsEqual} can be seen as folklore. 

\begin{theorem} \label{thm:Containment} The following containment relations of intersection graphs hold:
  \begin{enumerate}
  \item grounded segments  $=$ downward rays,
  \item downward rays  $\subsetneq$ rays,
  \item rays  $\subsetneq$ outer segments, 
  \item \label{itm:Sergio} outer segments  $\subsetneq$ outer $1$-strings and 
  \item \label{itm:StringsEqual}  outer $1$-strings  $=$ grounded $1$-strings.
  \end{enumerate}
\end{theorem}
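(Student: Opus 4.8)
The plan is to prove each of the five items separately, since they are essentially independent statements packaged into one theorem.

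\medskip

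\textbf{Item 1 (grounded segments $=$ downward rays).} The inclusion downward rays $\subseteq$ grounded segments is noted already in the excerpt (push the grounding line far below all apexes and truncate). For the reverse inclusion I would take a grounded segment representation with grounding line $\gline$ and transform it into a downward ray representation. The natural tool is a projective transformation. The idea: apply a projective map that sends $\gline$ to the line at infinity, so that each grounded segment (which has a base point on $\gline$) becomes a ray whose apex lies at infinity in the direction of the original segment. The subtlety is that a projective transformation need not preserve the intersection pattern globally — it can ``wrap around'' through infinity — so I would first argue that the whole configuration can be placed in a bounded region on one side of $\gline$, then choose the projective transformation carefully (or equivalently, pass to an affine chart) so that all the action happens in the affine part and incidences are preserved. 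After the transformation the images of the segments are pieces of lines each having one endpoint at infinity; extending them to full halflines toward that point at infinity does not create new intersections if we first scale so everything is generic, giving downward rays (after a rotation so the common direction points upward). The main care here is ensuring no spurious intersections are introduced when extending segments to rays.

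\medskip

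\textbf{Item 2 (downward rays $\subsetneq$ rays).} Containment is immediate. For properness I need a ray graph that is not a downward ray graph. Downward ray graphs are quite restricted: a family of downward rays has a natural ``left-to-right'' order of apexes, and one can show such graphs avoid certain configurations (for instance, they are known to be related to, and restricted like, grounded/cylinder structures). I would exhibit a concrete small graph realizable by rays but provably not by downward rays, proving non-realizability by a combinatorial case analysis on the cyclic/linear order of apexes and slopes forced by the edges and non-edges. This is one of the two genuinely new separation results, so I expect the hard part to be here: finding the right gadget and giving a clean impossibility argument rather than a brute-force enumeration.

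\medskip

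\textbf{Item 3 (rays $\subsetneq$ outer segments).} The excerpt notes rays $\subseteq$ outer segments (grounding circle at infinity). For properness — the conjecture of Cabello — I need an outer segment graph that is not a ray graph. The strategy I would use is a counting/dimension argument in the style of lower bounds for geometric realizability: the number of combinatorially distinct ray intersection graphs on $n$ vertices is bounded (rays have $3$ real degrees of freedom each, so the realization space is a semialgebraic set in $\mathbb{R}^{3n}$ and the number of sign patterns is $2^{O(n \log n)}$), whereas one can construct a family of outer segment graphs that is too large, or alternatively one exhibits a specific forbidden configuration. Given the paper's emphasis on "unified, elegant and simple" proofs, I'd lean toward an explicit small obstruction plus an ad hoc non-realizability argument for rays, analogous to Item 2. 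Either way the crux is the impossibility proof for rays.

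\medskip

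\textbf{Items 4 and 5.} Item 4 is cited as due to Cabello and Jej\v{c}i\v{c}, so I would simply invoke their result. Item 5 (outer $1$-strings $=$ grounded $1$-strings) is folklore: given an outer $1$-string representation on a grounding circle $\gcirc$, cut $\gcirc$ at a point not used as a base point and ``unroll'' it to a line, which turns outer strings into grounded strings while preserving all intersections (since cutting at an unused point destroys no incidences and the $1$-string condition is unaffected); conversely a grounded $1$-string representation sits inside a halfplane which can be mapped into a disk with the grounding line going to an arc of the bounding circle. I would also record, as it appears in Figure~\ref{fig:summary}, the trivial inclusions (grounded segments $\subseteq$ grounded $1$-strings, outer segments $\subseteq$ outer $1$-strings, etc.) to make the picture complete.

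\medskip

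Overall, the main obstacles are the two new separations (Items 2 and 3): both require constructing an explicit graph and then proving, by a combinatorial analysis of the forced order of apexes/base points and the slopes of the rays, that no ray representation exists. The equality results (Items 1 and 5) are transformation arguments whose only delicate point is checking that the geometric map (projective in Item 1, unrolling in Item 5) introduces no new intersections and destroys none of the old ones.
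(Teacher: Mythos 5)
Your treatment of the easy parts matches the paper: Item 1 is indeed proved by a projective transformation (the paper uses the explicit map $(x,y)\mapsto(x/y,-1/y)$, which sends the grounding line to the line at infinity and the upper half-plane bijectively onto a half-plane, so incidences are preserved with none of the genericity worries you raise), Item 5 is a routine re-drawing of strings (the paper extends strings to a large circle or down to a line rather than ``unrolling,'' but either works), and Item 4 is simply cited. The problem is that for Items 2 and 3 -- the two new separation results, which are the actual content of the theorem -- you only state an intention (``exhibit a concrete small graph \dots by a combinatorial case analysis'') without producing either the graph or the impossibility argument, so the proof is not there. Moreover, the counting route you float for Item 3 cannot work: outer segments, like rays, are described by $O(1)$ real parameters each, so both classes contain only $2^{O(n\log n)}$ graphs on $n$ vertices, and no cardinality argument can separate them.

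What is missing conceptually is the paper's order-forcing machinery. A small candidate graph has many possible circular orders of base points/apexes, and a direct case analysis over all of them is exactly what the paper avoids by the Cycle Lemma: attaching a long cycle with pendant edges ($G^{\pi}$) forces every representation of the augmented graph, in any of these classes, to place the relevant vertices in a prescribed cyclic order $\pi$, so one only has to refute a single $\pi$-ordered representation. With that in hand, Item 3 follows from an explicit $8$-vertex ordered gadget: two angle arguments show that in any $\pi$-ordered outer segment representation the line extensions of some pair of segments must meet \emph{outside} the grounding circle, which contradicts the Ray Characterization Lemma (a graph is a ray graph iff it has an outer segment representation with all extension crossings inside the circle, also in the ordered setting) -- a lemma your proposal never formulates. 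Item 2 is proved similarly: by Item 1 it suffices to rule out a $\pi$-ordered \emph{grounded segment} representation of a specific $7$-vertex ordered gadget, which is done by a nested-triangles argument ($a,f$ enclose $b,e$, which enclose $d$, so the segment $g$ based outside cannot reach $d$ without crossing a forbidden segment). Without the Cycle Lemma, the characterization lemma, and these explicit gadgets, your plan for Items 2 and 3 is a statement of the goal rather than a proof.
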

\begin{figure}[htbp]
  \centering
  \includegraphics{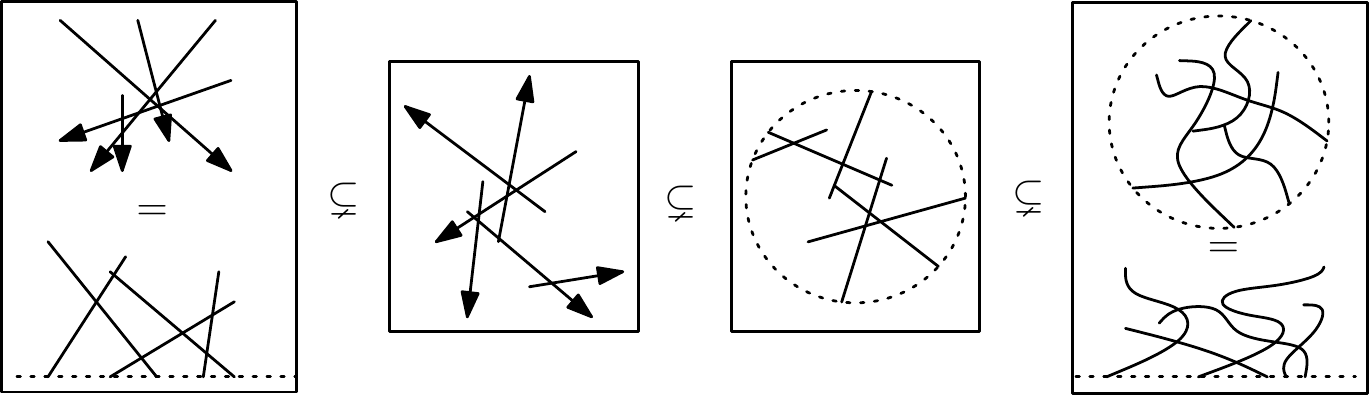}
  \caption{\label{fig:summary}Schematic description of some of our results.}
\end{figure}


\paragraph*{The Complexity Class $\exists\R$ and the Stretchability Problem}

The complexity class $\exists\R$ is the collection of decision problems that are polynomial-time equivalent to deciding the
truth of sentences in the first-order theory of the reals of the form:
$$
\exists x_1 \exists x_2 \ldots \exists x_n F(x_1,x_2,\ldots ,x_n),
$$
where $F$ is a quantifier-free formula involving inequalities and equalities of polynomials in the real variables $x_i$.
This complexity class can be understood as a ``real'' analogue of $\NP$.
It can easily be seen to contain $\NP$, and is known to be contained in $\PSPACE$~\cite{C88}.

In recent years, this complexity class revealed itself most useful for characterizing the complexity of realizability problems in computational geometry.
A standard example is the {\em pseudoline stretchability problem}.

Matou{\v{s}}ek~\cite[page 132]{matouvsek2002lectures} defines an \emph{
arrangement of pseudolines} as a finite collection of curves in the plane that satisfy the following conditions:
\begin{enumerate}[(i)]
\item  Each curve is $x$-monotone and unbounded in both directions.
\item  Every two of the curves intersect in exactly one point, and they cross at the intersection.
\end{enumerate}

In the stretchability problem, one is given the combinatorial structure of an arrangement of pseudolines in the plane as input, and is asked whether the same combinatorial structure can be realized by an arrangement of {\em straight lines}.
If this is the case, then we say that the arrangement is {\em stretchable}.
This structure can for instance be given in the form of a set of $n$ {\em local sequences}: the left-to-right order of the intersections of each line with the $n-1$ others. 
Equivalently, the input is the underlying rank-3 oriented matroid.
The stretchability problem is known to be $\exists\R$-complete~\cite{S90}.
We refer the reader to the surveys by Schaefer~\cite{S09}, Matou{\v{s}}ek~\cite{M14}, and Cardinal~\cite{C15} for further details.

\paragraph*{Computational Complexity Questions}
Given a graph class $\mathcal{G}$, we define $\rec(\mathcal G)$ as the following decision problem:
\begin{framed}
  \noindent \textbf{\rec$(\mathcal{G})$}
  
  \vspace{0.2cm}
  
  \noindent \textbf{Input:} A graph $G = (V,E)$. \\
  \noindent \textbf{Question:} Does $G$ belong to the graph class $\mathcal G$? 
\end{framed}

Potentially the recognition problem could become easier if we have some additional information.
In our case it is natural to ask if a given outer $1$-string representation of a graph $G$ has an outer segment representation. 
The same goes for grounded $1$-strings and grounded segments.
Finally, we will consider outer $1$-strings and rays.
Formally, we define the decision problem $\stretch(\mathcal{G},\mathcal{F})$ as follows.

\begin{framed}
\noindent \textbf{\stretch$(\mathcal{G},\mathcal{F})$}
  
  \vspace{0.2cm}
  
  \noindent \textbf{Input:} A graph $G = (V,E)$ and representation $R$ that shows that $G$ belongs to $\mathcal{F}$. \\
  \noindent \textbf{Question:} Does $G$ belong to the graph class $\mathcal G$? 
\end{framed}
Note that we need to assume that $\mathcal{F}$ is a graph class defined by intersections of certain objects.

We also complete the picture by giving computational hardness results on recognition and stretchability questions by proving the following theorem.
\begin{restatable}{theorem}{Complexity}\label{thm:RecHardness}
    The following problems are $\exists\R$-complete:
    \begin{itemize}
   \item $\rec(\textrm{outer segments})$ 
   and $\stretch(\textrm{outer segments}, \textrm{ outer $1$-strings})$,
   \item $\rec(\textrm{grounded segments})$
   and \\ $\stretch(\textrm{grounded segments},\textrm{ grounded $1$-strings})$,
   \item $\rec(\textrm{rays})$
   and $\stretch(\textrm{rays}, \textrm{outer $1$-strings})$.
  \end{itemize}
\end{restatable}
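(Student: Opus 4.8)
The plan is to establish membership in $\exists\R$ by a direct coordinate encoding, and $\exists\R$-hardness by a single reduction from pseudoline stretchability that, as a by-product, also produces the $1$-string certificate needed for the \stretch\ variants. For membership, I would introduce, for each vertex $v$ of the input graph $G$, real variables describing the object assigned to $v$ (the two endpoints of a segment, or the apex and a direction vector of a ray). The class-defining conditions become polynomial (in)equalities: the base point lies on the grounding line $\{x_2=0\}$ or the grounding circle $\{x_1^2+x_2^2=1\}$, and the relative interior lies in the correct half-plane or open disk. For every pair $\{u,v\}$, the predicate ``the objects of $u$ and $v$ meet'' is semialgebraic in the coordinates, and so is its negation: for two segments with prescribed endpoints this is a bounded case distinction, and for two rays one uses the intersection point of the supporting lines (a rational function of the coordinates by Cramer's rule) together with the parallel sub-case. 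Conjoining ``meet'' over all edges and ``do not meet'' over all non-edges gives an existential sentence over $(\R,+,\cdot,<)$ satisfiable exactly when $G$ lies in the class; in the \stretch\ variants the extra $1$-string input is irrelevant to satisfiability, so membership is unchanged.

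For hardness I would reduce from the stretchability of (simple) pseudoline arrangements, which is $\exists\R$-complete~\cite{S90,M14}. From an arrangement $\mathcal A$ of pseudolines $p_1,\dots,p_n$ — presented, say, as a wiring diagram — the reduction builds in polynomial time a graph $G_{\mathcal A}$ together with a $1$-string representation $R_{\mathcal A}$ (outer for the outer-segment and ray rows, grounded for the grounded-segment row) such that $\mathcal A$ is stretchable if and only if $G_{\mathcal A}$ belongs to the target class $\mathcal G\in\{\text{outer segments},\text{grounded segments},\text{rays}\}$. Since $R_{\mathcal A}$ always witnesses membership of $G_{\mathcal A}$ in the ambient $1$-string class $\mathcal F\supseteq\mathcal G$, this one reduction settles $\rec(\mathcal G)$ and $\stretch(\mathcal G,\mathcal F)$ simultaneously for all three rows of the theorem.

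The graph $G_{\mathcal A}$ would be assembled from three kinds of gadgets. First, a constant-size rigid \emph{frame} that, in any representation, fixes the grounding object and forces all remaining base points into a prescribed sub-interval of it, in a prescribed order. Second, for each $p_i$ a \emph{line gadget} — a short chain of segments, or a single ray — whose two extremities the frame forces into prescribed sub-intervals, so that the gadget is compelled to cross the frame ``like a straight line''. Third, for each crossing of $\mathcal A$ and, more generally, for each consecutive pair of crossings along a pseudoline, a constant-size \emph{order gadget} whose prescribed adjacencies and non-adjacencies force the crossings along each realized line gadget to appear in the same order as along $p_i$ in $\mathcal A$, i.e.\ reproduce the local sequences of the wiring diagram. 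The certificate $R_{\mathcal A}$ is obtained by literally drawing $\mathcal A$ and all gadgets as pairwise-at-most-once-crossing curves inside a disk (or the upper half-plane), which is possible by construction; routine bookkeeping gives $|G_{\mathcal A}|=\mathrm{poly}(n)$ and polynomial-time computability.

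The easy direction of correctness is that a straight-line drawing of $\mathcal A$, suitably scaled and clipped and combined with the frame and the order gadgets placed near the corresponding crossings, realizes $G_{\mathcal A}$ in $\mathcal G$. The hard part — and the main obstacle — will be the converse: showing that \emph{every} representation of $G_{\mathcal A}$ by grounded or outer segments (resp.\ by rays) is combinatorially forced to be a perturbation of the intended picture, so that the supporting lines of the line gadgets form an arrangement combinatorially isomorphic to $\mathcal A$ and hence certify that $\mathcal A$ is stretchable. This amounts to proving that the frame is genuinely rigid (each line gadget must behave like a pseudoline clipped to the grounding object) and that the order gadgets genuinely enforce the local sequences. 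This is precisely where the geometry of each class must be treated separately: the line-grounded, circle-grounded, and circle-at-infinity cases need slightly different frames, and the ray case needs extra care because rays are unbounded, so the frame must additionally prevent a line gadget from escaping to infinity on the wrong side. Once this rigidity analysis is in place, the equivalence — and hence Theorem~\ref{thm:RecHardness} — follows.
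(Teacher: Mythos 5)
Your membership argument and the overall shape of the reduction (a single reduction from pseudoline stretchability that also outputs a $1$-string certificate, settling $\rec$ and $\stretch$ together) match the paper. But the proposal stops exactly where the proof has to start: the ``frame'' and the ``order gadgets'' are only postulated, and you yourself defer the converse direction (\emph{every} representation of $G_{\mathcal A}$ is forced to look like the intended picture) as the main obstacle. That converse is the entire content of the hardness proof, and the paper supplies two concrete devices for it: (i) the order forcing graph $G^\pi$ of the Cycle Lemma (Lemma~\ref{lem:cycle}) --- a cycle of length $2n^2$ with each relevant vertex attached to a cycle vertex, attachment points at cycle-distance at least $2n$, whose Jordan-curve and region analysis (Lemmas~\ref{lem:centralPartOfCycle} and~\ref{lem:InnerIntersection}) forces the base points into the prescribed cyclic order in any outer $1$-string representation; and (ii) the probes: each pseudoline is tripled into $s_1,s_2,s_3$ and, for each crossing along it, a nested pair of strings is grounded between $s_1$ and $s_3$ on either side of $s_2$, and a short convexity argument shows that in any $\pi$-ordered outer segment representation the crossings along $s_2$ must occur in the prescribed left-to-right order. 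Without these constructions and their correctness proofs, your reduction is a plan rather than a proof. Moreover, your ``constant-size rigid frame'' is at best unjustified: the adjacency pattern of each of the $n$ line gadgets to a bounded-size frame takes only boundedly many values, so the frame alone cannot pin down an arbitrary prescribed order of $n$ base points; the paper's order-forcing gadget has size $\Theta(n^2)$ precisely so that the attachment points are far apart on the cycle, which the disjoint-region argument requires.

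A second divergence is strategic. You plan separate rigidity analyses for the three classes (different frames for the grounding line, the grounding circle, and the ``circle at infinity'', with extra care for unbounded rays). The paper avoids all of this via the containment chain grounded segments $\subseteq$ rays $\subseteq$ outer segments: it proves that if $\L$ is stretchable then $G_\L$ has a $\pi$-ordered \emph{grounded segment} representation, and that if $\L$ is not stretchable then $G_\L$ has no $\pi$-ordered \emph{outer segment} representation; the statements for rays and for all three $\stretch$ variants then follow with no additional geometry (the certificate for $\stretch$ being the grounded/outer $1$-string representation obtained by clipping the pseudoline arrangement, together with Lemma~\ref{lem:OuterStringEqualGroundedString}). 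Your per-class route is not wrong in principle, but the ray case you flag as delicate --- gadgets escaping to infinity --- is exactly the difficulty this sandwich argument renders unnecessary, and as submitted that case remains unproven.
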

\medskip
We want to point out that all statements of Theorem~\ref{thm:RecHardness} are proven in one simple and unified way. This uses heavily the complete chain of containment of the graph classes and the intrinsic similarity of all considered graph classes.
A highlight of Theorem~\ref{thm:RecHardness} is certainly the $\exists\R$-complete on the recognition problem of the natural graph class of ray intersection graphs.
Note that this strengthens the result of Cabello and Jej\v{c}i\v{c} on the separation between outer $1$-string and outer segment graphs.

\paragraph*{Previous Work and Motivation}

The understanding of the inclusion properties and the complexity of the recognition problem for classes of geometric intersection graphs have been the topic of numerous previous works.

Early investigations of string graphs date back to Sinden~\cite{S66}, and Ehrlich, Even, and Tarjan~\cite{EET76}. 
Kratochv{\'{\i}}l~\cite{K91} initiated a systematic study of string graphs, including the complexity-theoretic aspects~\cite{K91a}.
It is only relatively recently, however, that the recognition problem for string graphs has been identified as NP-complete~\cite{SSS03}.
NP membership is far from obvious, given that there exist string graphs requiring exponential-size representations~\cite{KM91}.

Intersection graphs of line segments were studied extensively by Kratochv{\'{\i}}l and Matou\v{s}ek~\cite{KM94}.
In particular, they proved that the recognition of such graphs was complete for the existential theory of the reals.
A key construction used in their proof is the {\em Order-forcing Lemma}, which permits the embedding of pseudoline arrangements as segment representations of graphs.
Some of our constructions can be seen as extensions of the Order-forcing Lemma to grounded and outer segment representations.

Outer segment graphs form a natural subclass of outer string graphs as defined by Kratochv{\'{\i}}l~\cite{K91}. 
They also naturally generalize the class of {\em circle graphs}, which are intersection graphs of chords of a circle~\cite{N85}.

A recent milestone in the field of segment intersection graphs is the proof of Scheinerman's conjecture by Chalopin and Gonçalves~\cite{CG09}, stating that planar graphs form a subclass of segment graphs. 
It is also known that outerplanar graphs form a proper subclass of circle graphs~\cite{WP85}, hence of outer segment graphs. 
Cabello and Jej\v{c}i\v{c}~\cite{CJ16} proved that a graph is outerplanar if and only if its 1-subdivision is an outer segment graph.

Intersection graphs of rays in two directions have been studied by Soto and
Telha~\cite{ST11}, they show connections with the jump number of some posets
and hitting sets of rectangles. The class has been further studied by Shrestha
et al.~\cite{STU10}, and Musta\c{t}\u{a} et al.~\cite{MNTTU16}. The results
include polynomial-time recognition and isomorphism algorithms. This is in
contrast with our hardness result for arbitrary ray graphs.

Properties of the chromatic number of geometric intersection graphs have been
studied as well.  For instance, Rok and Walczak proved that outer string
graphs are $\chi$-bounded~\cite{RW14}, and Kostochka and
Ne\v{s}et\v{r}il~\cite{KN98,KN02} studied the chromatic number of ray graphs
in terms of the girth and the clique number.

The complexity of the maximum clique and independent set problems on classes
of segment intersection graphs is also a central topic of study.
It has been shown recently, for instance, that the maximum clique problem is NP-hard on ray graphs~\cite{CCL13}, and that the maximum independent set problem is polynomial-time tractable on outer segment graphs~\cite{KMPV17}.

\paragraph*{Organization of the Paper}

In the next section, we give some basic definitions and observations. We also provide a short proof of the equality between the classes of
downward ray and of grounded segment graphs.

In Section \ref{sec:cycle}, we introduce the {\em Cycle Lemma}, a construction that will allow us to control the order of the slopes
of the rays in a representation of a ray graph, and the  order in which the segments are attached to the grounding line
or circle in representations of grounded segment and outer segment graphs.

In Section \ref{sec:stretch}, we show how to use the Cycle Lemma to encode the
pseudoline stretchability problem in the recognition problem for outer
segment, grounded segment, and ray graphs. We thereby prove that those
problems are complete for the existential theory of the reals and thus Theorem~\ref{thm:RecHardness}.

Finally, in Section \ref{sec:raySeg}, we establish two new separation results. 
First, we prove that ray graphs form a proper subclass of outer segment graphs, proving Cabello's conjecture.
Then we prove that downward ray graphs form a proper subclass of ray graphs.
We want to point out that Lemma~\ref{lem:DRayEqualGSegment}, Lemma~\ref{lem:OuterStringEqualGroundedString}, Theorem~\ref{thm:RayNotOuter} and Theorem~\ref{thm:RayNotDownwardRay} together show Theorem~\ref{thm:Containment}

\section{Preliminaries}\label{sec:preliminaries}

We first give a short proof of the equality between the classes of ray and
grounded segment graphs, thereby answering Cabello and Jej\v{c}i\v{c}'s second
question.  The proof is illustrated in Figure~\ref{fig:groundedtoray}.

 \begin{figure}[htbp]
\centering
\includegraphics{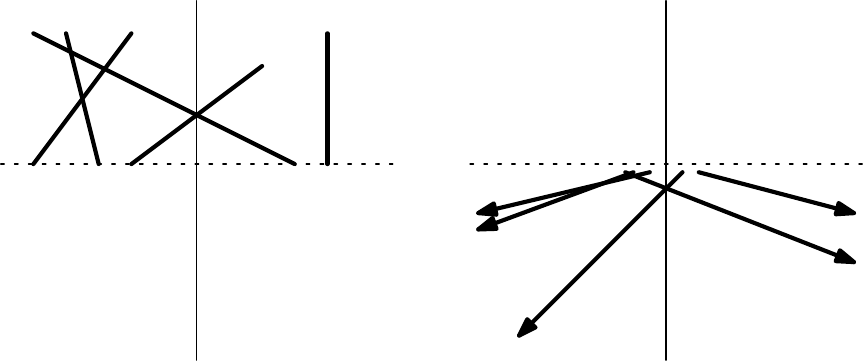}
\caption{Transforming a collection of grounded segments (left) into downward rays (right).
}\label{fig:groundedtoray}
\end{figure}

%

\begin{lemma}[Downward Rays $=$ Grounded Segments]\label{lem:DRayEqualGSegment}
  A graph $G$ can be represented as a grounded segment graph if and only if it can be represented by downward rays.
\end{lemma}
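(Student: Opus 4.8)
The plan is to prove both directions of the equivalence, each by a direct geometric transformation of one representation into the other. Since the excerpt already remarks that downward ray graphs are grounded segment graphs (taking a downward ray, its apex plays the role of the base point on a grounding line placed high above, with the ray going down), the substantive direction is to show that an arbitrary grounded segment representation can be converted into a representation by downward rays without changing the intersection pattern.

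For the forward direction I would start with a collection $S$ of grounded segments with respect to a horizontal grounding line $\gline$, all segment interiors lying above $\gline$. The key observation is that each grounded segment $s$, when extended to the full ray emanating from its base point on $\gline$ through the segment, is already a ray pointing into the open upper halfplane; the only two issues are (a) these rays point upward rather than downward, and (b) extending a segment to a ray may create new intersections between segments that did not originally cross. To handle (b), I would use a projective transformation (equivalently, place the picture inside a large disk and apply a central projection, or use the standard trick of mapping the plane via an inversion / a projective map that sends $\gline$ to the line at infinity). Concretely: apply a projective transformation $\varphi$ that maps the grounding line $\gline$ to the line at infinity. Under $\varphi$, each grounded segment $s$ — whose closure meets $\gline$ in exactly the base point — is mapped to a set whose closure meets the line at infinity in exactly one point, i.e.\ to a ray; moreover the direction of this ray is determined by the base point of $s$, and all these rays are parallel-free in a controlled way. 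Projective transformations preserve incidences and non-incidences among the points lying on the affine side, so two segments cross if and only if their images cross, giving a ray representation. Finally, a further affine map (a rotation, or noting that by a projective transformation we can send the relevant point at infinity so that all rays open downward — this is exactly the picture in Figure~\ref{fig:groundedtoray}) turns these into downward rays: since all base points were on a single line $\gline$, their images at infinity all lie ``in one halfplane direction,'' and one can choose $\varphi$ so that every resulting ray has its apex as its topmost point.

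The reverse direction is the easy inclusion already noted in the text, but I would spell it out for completeness: given downward rays $r_1,\dots,r_n$, pick a horizontal line $\gline$ strictly above all the apices; replace each $r_i$ by the segment from its apex going downward, truncated — actually no truncation is needed if instead we move $\gline$ above everything and reflect; cleanest is to observe that a downward ray is itself ``grounded'' after a projective transformation sending the common direction of the rays to a point of a grounding line, or even more simply: translate a grounding line $\gline$ to be tangent-at-infinity below; since every downward ray has a highest point (its apex) and extends downward to infinity, applying the inverse of the projective map used above converts each downward ray into a segment with its base point on $\gline$ and interior above $\gline$, with intersections preserved.

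The main obstacle — and the point that needs the most care — is ensuring that the transformation taking grounded segments to rays does not introduce spurious intersections when passing from bounded segments to unbounded rays, and simultaneously that it does not destroy existing ones. The projective transformation sending $\gline$ to infinity is exactly what resolves this: it turns ``segment ending on $\gline$'' into ``ray ending at infinity'' canonically, and it is a bijection on the open halfplane above $\gline$ that preserves collinearity, hence preserves crossings. I would need to check the mild technical points that (i) $\varphi$ can be chosen so that no base point maps to an ideal point in a degenerate position, (ii) the images are genuinely rays (half-lines) and not full lines or segments — this holds precisely because each original segment touches $\gline$ in exactly one endpoint — and (iii) the resulting rays can be made downward by composing with a suitable rotation/projectivity, using that all apices (images of the base points) lie on the image of $\gline$. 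A general-position perturbation of $\gline$ and of the base points within $S$ (which does not change the intersection graph, as a strict-inequality property) takes care of the degenerate cases.
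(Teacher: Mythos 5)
Your proposal is correct and follows essentially the same route as the paper: a projective transformation sending the grounding line $\gline$ to the line at infinity, which preserves incidences and turns grounded segments into downward rays (and its inverse handles the converse). The paper simply makes this explicit with the map $(x,y)\mapsto(x/y,-1/y)$, under which base points on $\gline$ go to ideal points in directions $(x,-1)$, so all resulting rays are automatically downward and none of your auxiliary rotation or general-position adjustments are needed.
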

\begin{proof}
Consider a coordinate system where the grounding line is the $x$-axis, and take the projective transformation defined in homogeneous coordinates by
  \[\begin{pmatrix}
     x\\ y\\ 1
    \end{pmatrix}
    \mapsto 
    \begin{pmatrix}
     x\\ -1\\ y
    \end{pmatrix}.
    \]
This projective transformation is a bijective mapping from the projective plane to itself, which maps grounded segments to downward rays. 
In the plane, it can be seen as mapping the points $(x,y)$ with $y>0$ to $(x/y, -1/y)$. 
Since projective transformations preserve the incidence structure, the equivalence of the graph classes follows. \qedhere
\end{proof}

\begin{lemma}[Ray Characterization]\label{lem:Raycharacterization}
  A graph $G$ can be represented as an outer segment graph with all intersections of line extensions inside the grounding circle $\gcirc$ if and only if it can be represented by rays.
\end{lemma}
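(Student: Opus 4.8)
The statement is an "if and only if" between two representations. The natural approach is to exhibit an explicit projective (or at least homography-like) transformation that maps one picture to the other, analogously to the proof of Lemma~\ref{lem:DRayEqualGSegment}. For the forward direction, suppose $G$ has an outer segment representation on a grounding circle $\gcirc$ in which every two supporting lines of the segments meet inside $\gcirc$. First I would normalize: apply a projective transformation sending $\gcirc$ to a conic, and in fact it is cleanest to send $\gcirc$ to a circle and then use a projective map that sends this circle to the line at infinity. Under such a map, the base points on $\gcirc$ go to points at infinity, i.e.\ to a set of directions; each outer segment, having one endpoint on $\gcirc$ and its interior strictly inside, becomes a ray emanating "from infinity" in a fixed direction — that is, an honest ray in the plane. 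The condition that all pairwise line-extension intersections lie strictly inside $\gcirc$ is exactly what guarantees that these intersection points stay in the affine plane (do not go to infinity), so incidences are preserved and no spurious intersections at infinity are created. Hence $G$ is a ray graph.

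**Details to handle.** A technical point is that a projective transformation sending a circle to the line at infinity is not a single affine or even a single real-projective map defined everywhere with the interior behaving uniformly; the standard trick is the one already used in the paper — work in homogeneous coordinates and pick the map carefully so that the open disk maps to (a chart of) the affine plane. Concretely, after placing $\gcirc$ as the unit circle, one can use a transformation in the spirit of $(x,y,1)\mapsto(\,\cdot\,)$ that sends the circle $x^2+y^2=1$ to the line at infinity; one convenient route is to first send the circle to a line by a projective map (circles and lines are projectively equivalent to conics, and a conic can be sent to a "degenerate" situation only in the complex setting, so more carefully: send $\gcirc$ to a parabola or use an inversive-distance argument). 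I expect the clean writeup to instead mirror Lemma~\ref{lem:DRayEqualGSegment} directly: transform the outer segment picture to a grounded segment picture (map the disk to a half-plane with $\gcirc$ going to the grounding line), then invoke that downward rays equal grounded segments — but that only gives downward rays, not arbitrary rays, so one must track what the "all intersections inside $\gcirc$" hypothesis becomes and argue it yields the full generality of rays.

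**The converse.** For the other direction, given a ray representation of $G$, I would compactify: add a circle $\gcirc$ "at infinity" large enough to enclose all apices and all pairwise intersection points of the ray-supporting lines (finitely many points, so this is possible), then truncate each ray to a chord-like outer segment from its entry point on $\gcirc$ down to... wait — a ray has only one endpoint in the affine plane, the apex, and runs to infinity in one direction, so I would instead take a huge circle $\gcirc$ containing all apices and all pairwise line-intersections, and replace each ray by the outer segment obtained as the portion of the ray from its apex... no: the apex is inside, the far end is on $\gcirc$. So each ray, clipped to the disk bounded by $\gcirc$, becomes a segment with exactly one endpoint (the clipped far end) on $\gcirc$ and interior inside — an outer segment with base point on $\gcirc$. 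Since $\gcirc$ was chosen to contain all pairwise supporting-line intersections, every such intersection remains inside $\gcirc$, giving the claimed special form, and the intersection pattern is unchanged because we only removed points outside the disk where no two rays met (they met, if at all, inside).

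**Main obstacle.** The genuinely delicate part is the forward direction's use of a projective transformation that turns a circle into the line at infinity while mapping the open disk onto an affine chart: real projective maps do not send a (nondegenerate real) conic to a line, so one cannot literally do this, and the honest argument must route through the grounded-segment model (disk $\to$ half-plane) and then carefully re-examine which rays — all of them, once the interior-intersection hypothesis is in force — arise, rather than only downward rays. Making that last bookkeeping precise, i.e.\ showing the hypothesis "all line extensions meet inside $\gcirc$" is exactly the slack needed to upgrade "downward rays" to "arbitrary rays," is where I expect to spend the real effort.
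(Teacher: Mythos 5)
Your backward direction (rays $\Rightarrow$ outer segments) is correct and is exactly the paper's argument: enclose all apices and all pairwise intersections of the supporting lines in a large circle and clip each ray to the disk. The problem is your forward direction, where there is a genuine gap that you yourself flag but do not close. The projective-transformation idea cannot work as stated (a real projective map never sends a nondegenerate conic to a line, as you note), and the fallback route you sketch --- disk $\to$ half-plane, then invoke downward rays $=$ grounded segments, then somehow ``upgrade'' downward rays to arbitrary rays using the interior-intersection hypothesis --- is left entirely unresolved; that upgrade is precisely the content of the lemma, so deferring it means the direction is not proved. Moreover, the hypothesis does not play the role you assign to it: it is not ``slack'' for converting a grounded picture into general rays.

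The intended argument needs no transformation at all. Given an outer segment representation in which all pairwise intersections of the supporting lines lie inside $\gcirc$, simply extend each segment beyond its base point on $\gcirc$ to a ray (apex at the inner endpoint). Inside $\gcirc$ each ray coincides with its segment, so all original intersections persist; any new intersection between two rays would have to occur outside $\gcirc$, but it would then be an intersection of the two supporting lines outside $\gcirc$, contradicting the hypothesis. Hence the rays realize exactly the same graph. This is the one-line use of the ``all line extensions meet inside $\gcirc$'' condition that your write-up is missing.
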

\begin{proof}
  See Figure~\ref{fig:RayEqualOuter} for an illustration of the following.

  ($\Leftarrow$) Let $R$ be a representation of $G$ by rays, and let \L be the set of the lines extending all involved rays. Then there exists a circle $\gcirc$ that contains all the intersections of \L and at least some part of each ray. We define a representation $R'$ of $G$ as outer segment representation by restricting each ray to the inside of $\gcirc$. It is easy to see that this  indeed is a representation of $G$ with the desired property.
  
  ($\Rightarrow$) Let $R$ be a representation of $G$ by outer segments. We define a set of rays by just extending each segment at its base point on the grounding circle $\gcirc$ to a ray. If two segments intersected before, then the corresponding rays will intersect as well trivially.  Moreover, by the assumption that all the line extensions intersect inside $\gcirc$, it follows that the rays will not intersect outside $\gcirc$s, and hence the cooresponding ray graph is a representation of $G$.
\end{proof}

\begin{figure}[htbp]
\centering
\begin{minipage}[b]{.4\linewidth}
\centering
\includegraphics{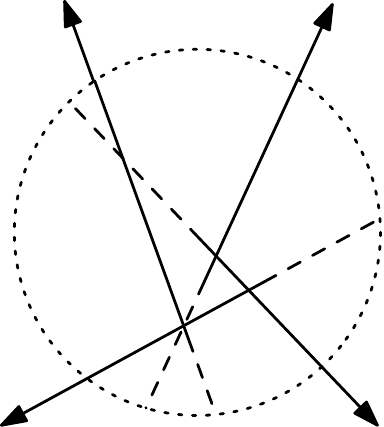}
\subcaption{Given a collection of rays, there exists a circle that contains all intersections of all extensions of the rays and thus induces an outer segment representation of the same graph.}\label{fig:RayEqualOuter}
\end{minipage}%
\hspace{0.3cm}
\begin{minipage}[b]{.55\linewidth}
\centering
\includegraphics{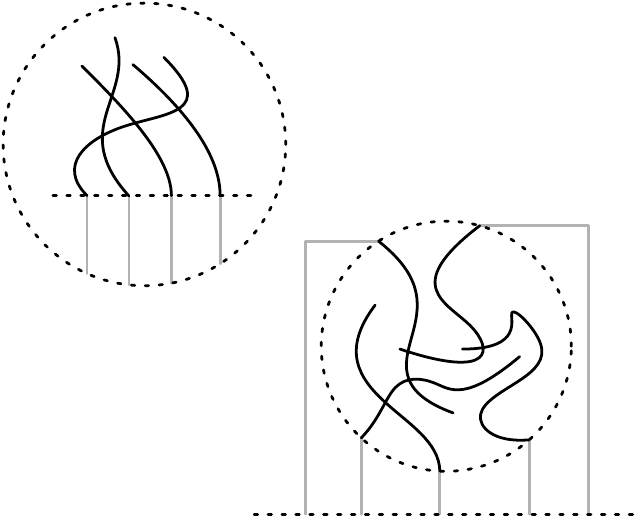}
\subcaption{Simple extensions of outer $1$-strings yield grounded $1$-strings and vice versa.}\label{fig:StringEqualString}
\vspace{0.7cm}
\end{minipage}

\caption{Simple transformations between different graph representations.}\label{fig:1}
\end{figure}

\begin{remark}
  It is tempting to try to find a projective transformation that maps the unit circle $S^1$ to infinity in a way that outer segments become rays. 
As we will show later, outer segments and rays represent different graph classes. Thus such a mapping is impossible. 
With the help of M\"{o}bius transformations it is possible to find a mapping that maps the unit circle $S^1$ to infinity. 
However, outer segments then become connected parts of hyperbolas instead of straightline rays. 
\end{remark}

Recall that a collection of curves in the plane are called $1$-strings if any two curves intersect \emph{at most} once.
We can define \emph{grounded $1$-string graphs} and \emph{outer $1$-string graphs} in an analogous way to the corresponding segment graphs by replacing segments by $1$-strings.

\begin{lemma}[Grounded 1-Strings $=$ Outer 1-Strings]\label{lem:OuterStringEqualGroundedString}
  A graph $G$ can be represented as a grounded $1$-string graph if and only if it can be represented as an outer $1$-string graph.
\end{lemma}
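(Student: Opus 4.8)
The plan is to mimic, at the purely topological level, the same idea that makes the projective transformation work in Lemma~\ref{lem:DRayEqualGSegment}, but without requiring anything linear: since we only deal with $1$-strings, which are arbitrary simple curves intersecting pairwise at most once, we have complete freedom to bend the grounding line into a circle and the curves along with it. Concretely, in the direction ($\Rightarrow$) I would start from a grounded $1$-string representation with grounding line $\gline$, pick a large open halfdisk $D$ lying above $\gline$ that contains all the intersection points among the strings and at least an initial portion of every string (this exists because the representation is finite, hence bounded), and truncate every string to its part inside $\overline D$. Then I would apply a homeomorphism of the closed halfdisk $\overline D$ to the closed unit disk that sends the diameter (the relevant part of $\gline$) to the bounding circle $\gcirc$ and the open halfdisk to the open disk. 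Such a homeomorphism exists (a closed halfdisk and a closed disk are homeomorphic, and one can choose the homeomorphism to carry boundary to boundary with the diameter going to the full circle minus a point, or to the full circle after an additional adjustment). Under this map, each truncated grounded $1$-string is sent to a simple curve with exactly one endpoint on $\gcirc$ and the rest in the interior, i.e.\ an outer $1$-string; intersections are preserved because homeomorphisms are injective, and the ``at most once'' condition survives because we never create new crossings and never merge curves. Hence $G$ is an outer $1$-string graph. The illustration in Figure~\ref{fig:StringEqualString} depicts exactly this bending operation.

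For the converse ($\Leftarrow$), I would run the argument in reverse: given an outer $1$-string representation on grounding circle $\gcirc$, pick a point $p$ on $\gcirc$ that is not a base point of any string (possible since there are finitely many base points), and apply a homeomorphism from the closed disk minus $p$ to a closed halfplane region sitting above a line $\gline$, sending $\gcirc \setminus \{p\}$ to $\gline$ and the interior of the disk to the open halfplane. Each outer $1$-string, having its base point on $\gcirc \setminus \{p\}$ and its interior inside the disk, maps to a curve with one endpoint on $\gline$ and interior above it, which is a grounded $1$-string. Again all intersections and non-intersections are preserved, and no two curves cross more than once. This gives the grounded $1$-string representation of $G$, completing the equivalence.

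The only genuinely delicate point is making sure the boundary behaviour of the homeomorphism is exactly what we want: in the ($\Rightarrow$) direction we need the diameter of the halfdisk to map onto the whole circle $\gcirc$ (or at least onto $\gcirc$ minus a point, which is harmless — a base point can sit arbitrarily close to the missing point, or one can absorb the missing point by a final small perturbation), and we need base points on $\gline$ to land on $\gcirc$ and nothing in the interior to escape to the boundary. This is a standard fact about homeomorphisms of disks and halfdisks, so I would state it as such rather than constructing the map explicitly; one clean way is to first straighten $\gline$ so that the relevant bounded region is a halfdisk, then use the radial-type homeomorphism $\overline D_{\text{half}} \to \overline D$ that is the identity on the diameter's endpoints and opens up the diameter. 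No combinatorial structure of the graph is used anywhere — the lemma is, as the excerpt says, essentially folklore, and the proof is a one-paragraph topological bending argument in each direction.
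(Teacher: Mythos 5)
Your proposal is correct, but it proves the lemma by a genuinely different mechanism than the paper. The paper never deforms the plane: in the ($\Rightarrow$) direction it simply takes a circle $\gcirc$ large enough to contain the whole grounded representation and \emph{extends} each string perpendicularly through the grounding line until it hits $\gcirc$; in the ($\Leftarrow$) direction it drops a horizontal line $\gline$ below $\gcirc$ and extends each string to $\gline$ by a short vertical (or horizontal-then-vertical) tail. The content of the paper's argument is thus entirely about routing these extension tails so that no new incidences are created. Your argument instead keeps the strings fixed and moves the ambient region: a homeomorphism of a closed halfdisk onto the closed disk (resp.\ of the closed disk minus a boundary point onto a closed halfplane) carries boundary to boundary and interior to interior, so base points land on $\gcirc$ (resp.\ on $\gline$), interiors stay inside, and all intersection patterns, simplicity, and the ``at most one crossing per pair'' property are preserved for free. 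What your route buys is that there is nothing to check about new crossings at all; what the paper's route buys is a completely explicit, picture-level construction (\Cref{fig:StringEqualString}) with no appeal to abstract homeomorphisms of disks.

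Two small points to tighten. First, your stated worry about the boundary behaviour is aimed at the wrong target: a homeomorphism of the closed halfdisk onto the closed disk can never send the diameter onto the whole circle, nor onto the circle minus a point, since the semicircular arc of the boundary must also map onto part of $\gcirc$; the diameter necessarily goes to a closed arc of $\gcirc$. But this is harmless --- an outer $1$-string representation does not require base points to be spread over all of $\gcirc$, so mapping the diameter to an arc already suffices, and the ``final small perturbation'' you mention is unnecessary. Second, in the ($\Rightarrow$) direction it is cleaner to choose the halfdisk large enough to contain the \emph{entire} (bounded) representation rather than only the intersections, since truncating a string that exits and re-enters the halfdisk could disconnect it; with the larger halfdisk no truncation is needed.
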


\begin{proof}
    See Figure~\ref{fig:StringEqualString} for an illustration of this proof.
    
    ($\Rightarrow$) Let $R$ be a representation of $G$ by grounded $1$-strings with grounding line $\gline$.  
	Take a large circle $\gcirc$ that completely contains 
	$R$ and extend the $1$-strings perpendicularly from the grounding point on $\gline$ to the opposite side of $\gline$ until they meet the circle $\gcirc$. 
	This procedure yields an outer $1$-string representation with grounding circle $\gcirc$ and the same incidences as $R$, hence an outer $1$-string representation of $G$.
    
    ($\Leftarrow$) Let $R$ be a representation of $G$ by outer $1$-strings grounded on a circle $\gcirc$. 
	Let $\gline$ be a horizontal line below $\gcirc$. 
	Extend any $1$-string whose grounding point is on the bottom half of $\gcirc$ with a vertical line segment to $\gline$.  
	Extend any $1$-string whose grounding point is on the top half of $\gcirc$ with a horizontal segment followed by a vertical segment from $\gcirc$ to the line $\gline$.  
	This procedure clearly does not alter any incidences. 
	Thus it provides a grounded segment representation of $G$.
\end{proof}

\paragraph*{Ordered Representations}
Given a graph $G$ and a permutation $\pi$ of the vertices, we say that a grounded (segment or string) representation of $G$ is \emph{$\pi$-ordered} if the base points of the cooresponding segments or strings are in the order of $\pi$ on the grounding line, up to inversion and cyclic shifts.
In the same fashion, we define \emph{$\pi$-ordered} for outer (segment or string) representations and (downward) ray representations, where rays are ordered by their angles with the horizontal axis.


\section{Cycle Lemma} \label{sec:cycle}

For some of our constructions, we would like to force that the segments or strings representing the vertices of a graph appear in a specified order on the grounding line or circle.
To this end, we first study some properties of the representation of cycles, which in turn will help us to enforce this order.

Given a graph $G = (V,E)$ on $n$ vertices $V = \{v_1,\dots,v_n\}$ and a permutation $\pi$ of the vertices of $G$, we define the \emph{order forcing graph} $G^{\pi}$ as follows.
The vertices $V(G^{\pi})$ are defined by $V \cup \{1,\ldots,2 n^2 \}$
and the edges $E(G^\pi)$ are defined by 
\( E \cup \{\, (2in, v_{\pi(i)} )\, | \, i= 1,\ldots , n \, \} \cup 
\{\, (i, i+1 )\, | \, i= 1,\ldots , 2n^2 \, \}\)
(here we conveniently assume $2n^2+1 = 1$).

  For the sake of simplicity, we think of $\pi$ as being the identity and
  the vertices as being indexed in the correct way.
  The vertices of $G$ are called \emph{relevant}, and the additional vertices of $G^{\pi}$ are called {\em cycle vertices}.
  Note that on the cycle, the distance between any two cycle vertices $u,v$ that are adjacent to different relevant vertices is at least $2n$.
  
\begin{figure}[htbp]
  \centering
  \includegraphics{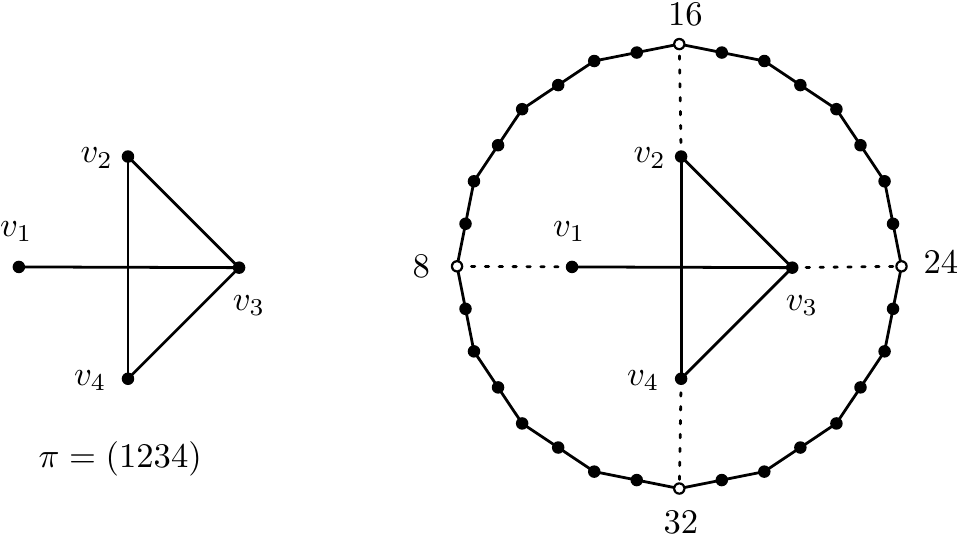}
  \caption{Illustration of the definition of order forcing graphs.}
  \label{fig:OrderForcingGraph}
\end{figure}
  
\begin{lemma}[Cycle Lemma]\label{lem:cycle}
 Let $G$ be a graph and $\pi$ be a permutation of the vertices of $G$.
 Then there exists a $\pi$-ordered representation of $G$ if and only if there exists a representation of $G^{\pi}$.
 This is true for the following graph classes: grounded segment graphs, ray graphs, outer segment graphs, and outer $1$-string graphs.
\end{lemma}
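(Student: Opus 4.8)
The plan is to prove both directions of the equivalence, and to do so in a way that is uniform across all four graph classes by treating a grounded representation and an outer / ray representation on the same footing (base points lie on a grounding curve $\gamma$, which is a line, a circle, or the circle/line at infinity). For the easy direction, suppose $G^\pi$ has a representation in the relevant class. The cycle vertices $1,2,\dots,2n^2$ form a cycle $C_{2n^2}$, and the relevant vertices of $G$ sit inside it as an induced subgraph. Restricting the representation of $G^\pi$ to the relevant vertices gives a representation of $G$; the work is to argue that the induced order of the base points along $\gamma$ is exactly $\pi$ (up to inversion and cyclic shift). The point is that a long induced cycle, represented by grounded/outer objects, must have its base points arranged along $\gamma$ in the cyclic order of the cycle — because each object is attached to $\gamma$ and consecutive objects must meet while non-consecutive ones must not, and an object attached at a point of $\gamma$ separates the inside region into two parts. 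So the cycle vertices appear in their natural cyclic order on $\gamma$, and then vertex $v_{\pi(i)}$, being forced to intersect the object at cycle-position $2in$, must have its base point in the arc ``near'' position $2in$; since consecutive relevant vertices are pinned to cycle positions that are $2n$ apart and the cycle objects between them shield one arc from the other, the base points of $v_{\pi(1)},\dots,v_{\pi(n)}$ inherit the cyclic order $1,\dots,n$, i.e.\ the representation is $\pi$-ordered.

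For the converse, suppose we have a $\pi$-ordered representation $R$ of $G$. We must extend it to a representation of $G^\pi$ by adding, for each $i$, an object for cycle vertex $2in$ that meets $v_{\pi(i)}$ and its two cycle neighbours and nothing else, plus $2n-1$ further cycle objects between consecutive ``anchor'' cycle vertices forming a path, each meeting only its two path neighbours. The idea is to route the new objects in a thin ``collar'' near the grounding curve $\gamma$, just outside the part of the plane used by $R$ near $\gamma$ but still on the correct side: because the base points of $R$ are in $\pi$-order along $\gamma$, there is room between consecutive base points to insert short objects hugging $\gamma$. An anchor object at position $2in$ is a short object based on $\gamma$ very close to the base point of $v_{\pi(i)}$, long enough to stab $v_{\pi(i)}$ but short enough to avoid every other object of $R$; the intermediate path objects are even shorter objects that only graze their neighbours. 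One has to check that short grounded segments (resp.\ short rays, which near $\gamma$ look like segments; resp.\ short $1$-strings) can be placed pairwise non-crossing except along the path, and meeting the designated relevant vertex — this is where the large number $2n^2$ of cycle vertices helps, since it guarantees enough ``slack'' to keep everything local. For ray graphs one additionally uses Lemma~\ref{lem:Raycharacterization} (or works directly with the circle at infinity, ordering rays by slope) so that the same collar argument goes through; for the $1$-string case the routing is strictly easier since strings may bend.

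The main obstacle is the converse, specifically making the ``collar'' construction rigorous and simultaneously valid for segments (which are straight and hence inflexible) and for rays and $1$-strings. For grounded and outer \emph{segments}, a short segment based at a point $p\in\gamma$ can only point in directions into the interior half/disk, and to hit a prescribed relevant object while avoiding all others one needs that object's representative to come close to $\gamma$ near $p$ — which is not automatic. The fix is to first normalize $R$: since base points are in $\pi$-order, slide/perturb the base points and, if needed, truncate-and-relocate nothing but instead observe that every grounded object contains an initial sub-object near its base point, and place the anchor object of vertex $2in$ to intersect that initial stub of $v_{\pi(i)}$; choosing all new objects with base points in a sufficiently small neighbourhood of the base point of $v_{\pi(i)}$, and sufficiently short, their only possible intersections are with each other and with $v_{\pi(i)}$, and by making them progressively shorter along the path we control exactly which pairs meet. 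Verifying that ``sufficiently small'' and ``sufficiently short'' can be chosen consistently for all $n$ anchors at once, so that the collars around different base points are disjoint, is the crux and is exactly why the construction dedicates $2n$ cycle vertices per relevant vertex; I expect this bookkeeping, rather than any conceptual difficulty, to be the bulk of the proof.
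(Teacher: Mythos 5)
Your forward direction (from a $\pi$-ordered representation of $G$ to a representation of $G^\pi$) breaks down exactly at the ray case, and your two proposed fixes do not repair it. Rays admit no ``short'' objects: every new cycle object is an entire ray, so its direction and apex must be chosen globally so that it misses all old rays except its one designated neighbour and all new rays except its two cycle neighbours -- a collar of local stubs near a grounding curve simply does not exist in this class. Detouring through Lemma~\ref{lem:Raycharacterization} does not help either: converting the ray representation of $G$ to outer segments and adding tiny segments yields an outer segment representation of $G^\pi$, but you cannot convert back, because the line extensions of nearly parallel tiny segments hugging the circle meet \emph{outside} the grounding circle, so the hypothesis of Lemma~\ref{lem:Raycharacterization} fails; you would only have shown that $G^\pi$ is an outer segment graph, not a ray graph. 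The paper needs a genuinely different construction here: a disk $D$ containing all apices and crossings, the lines orthogonal to each ray at its exit point forming a convex polygon $P$, a refinement $Q$ with $n$ extra edges near each vertex, and pairs of rays along the edges of $Q$ with slopes given by consecutive edge normals, so that the new rays form a cycle and each old ray meets exactly one of them. Nothing in your sketch supplies this. (For grounded and outer segments your collar construction is fine and is in fact a somewhat more elementary route than the paper's, which truncates each segment at its last intersection, attaches a tiny orthogonal segment $t_s$ to each $s$, and routes the path through the region outside the convex hull of the segments, split by rays through its reflex vertices.)

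In the backward direction your key ordering claim rests on a false statement: a single object attached to the grounding curve does \emph{not} separate the inside region into two parts -- one can walk around its free end -- so ``consecutive must meet, non-consecutive must not'' does not by itself force the cyclic order. The separation structure that actually does the work is the Jordan curve formed by the central parts of the cycle strings (Lemma~\ref{lem:centralPartOfCycle}) together with the regions $R(i)$ bounded by two start portions, a piece of that Jordan curve, and an arc of the grounding circle (proof of Lemma~\ref{lem:InnerIntersection}); these regions are pairwise disjoint for cycle vertices at distance at least $4$ and appear along the circle in the cyclic order of the cycle, and every relevant vertex is forced to have its base point in the region of its anchor. Note also that your claim that the cycle's base points appear exactly in the cycle's cyclic order is stronger than what is true or needed (adjacent cycle vertices may swap); what the paper establishes, and what suffices, is the ordering of the disjoint regions of the far-apart anchors $2n,4n,\dots$. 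So the intended structure of your argument matches the paper's, but as written the crucial step is unsupported, and the ray case of the construction is missing.
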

Note that for the case that $|V(G)| \leq 3$ this statement is trivial, as it can be easily checked that in these finitely many cases both graphs can always be realized.
Thus from now on, we assume that $|V(G)| \geq 4$.

Before proving Lemma~\ref{lem:cycle}, 
we first study the representations of cycles.
Let $C = 1,2,3,\ldots ,n$ be a cycle of length $n$ and $R$ be a $1$-string representation of $C$.
Then each string $i$ is crossed by the strings $(i-1)$ and $(i+1)$ exactly once.
The part of $i$ between the two intersections is called \emph{central part} of $i$ and denoted by $z_i$.
The intersection points are denoted by $p_{i,i-1}$ and $p_{i,i+1}$.

\begin{figure}[htbp]
  \centering
  \includegraphics{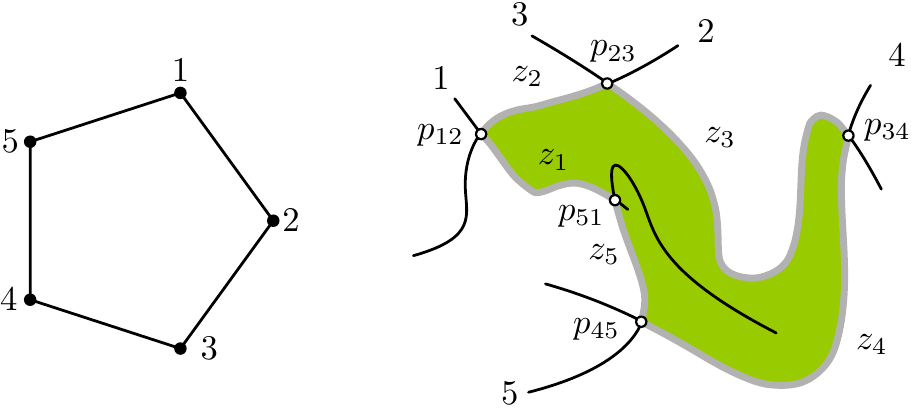}
  \caption{A 1-string representation of a 5-cycle. 
	}
  \label{fig:cycle}
\end{figure}

\begin{lemma}\label{lem:centralPartOfCycle}
    Let $C$ be a cycle and $R$ be a $1$-string representation of $C$.
    The union of all central parts of all the $1$-strings of $R$ forms a Jordan curve, which we denote by $J(C)$.
    This also holds in case that $C$ is a subgraph of some other graph $G$.
\end{lemma}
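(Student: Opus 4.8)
The plan is to show that the union $\bigcup_i z_i$ of the central parts of the $1$-strings is connected, and that every point of it has exactly two ``sides'' locally, which together with compactness forces it to be a simple closed curve. First I would observe that the central part $z_i$ is the sub-arc of string $i$ with endpoints $p_{i,i-1}$ and $p_{i,i+1}$; since string $i$ is a simple curve, $z_i$ is itself a simple arc. Consecutive central parts $z_i$ and $z_{i+1}$ share exactly the point $p_{i,i+1}=p_{i+1,i}$ (here the $1$-string property is essential: strings $i$ and $i+1$ meet only once, so they do not meet again within the central parts), and this point is an \emph{endpoint} of both arcs. Hence walking $z_1, z_2, \dots, z_n, z_1$ traces out a closed walk in which consecutive arcs are glued at single common endpoints.

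The key step is to rule out any other intersections among the central parts, i.e.\ that $z_i \cap z_j = \emptyset$ whenever $i$ and $j$ are non-consecutive on $C$. This is exactly where the hypothesis that $R$ represents the cycle $C$ (and nothing more between these vertices) is used: in a $1$-string representation of $C$, strings $i$ and $j$ are non-adjacent for non-consecutive $i,j$, so they do not intersect at all, and in particular their central parts are disjoint. I would state this carefully for the ``subgraph of $G$'' case as well: even if $i$ and $j$ happen to be adjacent in the larger graph $G$, the definition only asks that the $z_i$'s be sub-arcs determined by the cycle edges, and one restricts attention to the representation of the cycle; the central parts are defined via the cycle-neighbors only, and the argument that consecutive ones share a single endpoint while non-consecutive ones are disjoint goes through verbatim — I would remark that adjacency in $G$ between non-consecutive cycle vertices is irrelevant because we never use the arcs \emph{outside} the central parts. (If the intended reading is that $C$ is an induced subgraph, this is immediate; otherwise one notes $z_i$ and $z_j$ for $|i-j|\ge 2$ still cannot meet, since any crossing point would have to lie on both strings, but the only crossings referenced are the cycle ones $p_{i,i\pm1}$.)

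Given disjointness of non-consecutive central parts and single-endpoint-gluing of consecutive ones, the union is the image of the circle $S^1$ (partitioned into $n$ arcs mapped homeomorphically onto $z_1,\dots,z_n$, with matching endpoints) under a continuous map that is injective except that it is a bijection on each arc and the arcs only overlap at the prescribed shared endpoints; hence the map is globally injective, so $\bigcup_i z_i$ is a continuous injective image of $S^1$, i.e.\ a Jordan curve $J(C)$. I expect the main obstacle to be the bookkeeping around the ``subgraph of $G$'' clause — making precise that the central parts are well-defined and pairwise-disjoint-except-at-consecutive-endpoints even when $C$ sits inside a richer graph — rather than any deep topology; the Jordan curve conclusion itself is then just the observation that a closed curve built from finitely many simple arcs glued cyclically end-to-end with no other contacts is simple.
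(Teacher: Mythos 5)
Your proposal is correct and takes essentially the same route as the paper, whose entire proof is to exhibit the curve explicitly as $p_{12}, z_2, p_{23}, z_3, \ldots, z_n, p_{n1}, z_1$; you merely spell out the verification that this closed curve is simple (consecutive central parts share only the single crossing point by the $1$-string property, non-consecutive ones are disjoint because the corresponding vertices are non-adjacent). The only shaky spot is your parenthetical for a non-induced cycle inside $G$: if two non-consecutive cycle vertices are adjacent in $G$, their strings \emph{do} cross, possibly within both central parts, so that case genuinely needs the cycle to be induced (or the crossing to avoid the central parts) --- but the paper only ever applies the lemma to induced cycles (as in Lemma~\ref{lem:InnerIntersection}), where your argument, like the paper's, goes through.
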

\begin{proof}
    Using the above notation, the curve can be explicitly given as: \[p_{12},z_2 ,p_{23},z_3,\ldots ,z_n, p_{n1},z_1. \qedhere\]
\end{proof}


\begin{lemma}\label{lem:InnerIntersection}
    Let $C$ be an induced cycle of the graph $G$ and $R$ be an outer $1$-string representation of $G$. 
    Further, let $a,b \notin C$ be two adjacent vertices, which are adjacent to two vertices $u_a,u_b \in V(C)$ with $\dist(u_a,u_b) \geq 4$ on the cycle.
	Then $a$ must intersect the central part of $u_a$, $b$ must intersect the central part of $u_b$, and $a$ and $b$ must intersect in the interior of $J(C)$.
\end{lemma}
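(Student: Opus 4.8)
The plan is to read the representation $R$ inside the closed disk $D$ bounded by \gcirc\ and to exploit the Jordan curve $J(C)$ from Lemma~\ref{lem:centralPartOfCycle}. First I would set up an annulus: since each central part $z_i$ consists of crossing points and arcs lying strictly inside \gcirc, we have $J(C)\subseteq\operatorname{int}D$, so $\mathcal A:=\operatorname{int}D\setminus\overline{\operatorname{int}J(C)}$ is an open annulus with outer boundary \gcirc\ and inner boundary $J(C)$, while every base point of $R$ lies on \gcirc. Because $C$ is induced, a cycle string $i$ meets $J(C)=\bigcup_k z_k$ only in its two crossing points $p_{i,i-1},p_{i,i+1}$ with its cycle-neighbours, and since no three strings of $R$ pass through a common point these are relative-interior points of central parts; hence string $i$ splits into $z_i\subseteq J(C)$ together with two \emph{tails}, the one carrying $b_i$ lying in $\overline{\mathcal A}$ and touching $J(C)$ only at its inner endpoint, one of the \emph{junctions} $p_{i,i-1},p_{i,i+1}$.

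The combinatorial heart of the argument is the observation that for every $j$ the tail $t_{j+1}$ must attach at one of the two consecutive junctions $p_{j,j+1},p_{j+1,j+2}$ (its only candidates), so at least one of those two junctions carries a base-point tail; hence no two consecutive junctions are tail-free. Consequently, when $\dist(u_a,u_b)\ge 4$, the shorter arc of $J(C)$ between $z_{u_a}$ and $z_{u_b}$ contains at least two consecutive \emph{strictly interior} junctions (ones not touching $z_{u_a}$ or $z_{u_b}$), one of which carries a base-point tail $W_1=t_{i_1}$; since the two central parts meeting at that junction lie strictly between $z_{u_a}$ and $z_{u_b}$, we get $i_1\notin\{u_a,u_b\}$. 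Applying the same count to the other arc of $J(C)$ (which is at least as long) yields a tail $W_2=t_{i_2}$ with $i_2\notin\{u_a,u_b\}$ attached strictly between $z_{u_b}$ and $z_{u_a}$ on that side. I would then verify the two planar-topology facts that (i) $W_1,W_2$ are disjoint arcs joining \gcirc\ to $J(C)$, one on each side of the pair $\{z_{u_a},z_{u_b}\}$, so $\overline{\mathcal A}\setminus(W_1\cup W_2)$ has exactly two components, one with $z_{u_a}$ on its inner boundary and one with $z_{u_b}$; and (ii) since $W_1,W_2$ attach strictly between $z_{u_a}$ and $z_{u_b}$, each is disjoint from the strings $u_a$ and $u_b$, so the connected set $(\text{string }u_a)\cap\overline{\mathcal A}\supseteq z_{u_a}$ lies in the first component and $(\text{string }u_b)\cap\overline{\mathcal A}\supseteq z_{u_b}$ in the second.

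Next I would locate $a$ and $b$. Since $a$ is adjacent to $u_a$ (and, as in our application, to no other vertex of $C$), string $a$ meets the cycle strings only in the single point $x_a=a\cap(\text{string }u_a)$. If $x_a\notin z_{u_a}$ then $a$ misses $J(C)$ entirely, so $a\subseteq\overline{\mathcal A}$; set $K_a:=a$. If $x_a\in z_{u_a}$ then $a$ crosses $J(C)$ transversally exactly once, at $x_a$, splitting off an outer arc $a^{\mathrm o}$ from $b_a$ to $x_a$ inside $\overline{\mathcal A}$, the rest of $a$ lying in $\overline{\operatorname{int}J(C)}$; set $K_a:=a^{\mathrm o}$. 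In either case $K_a\subseteq\overline{\mathcal A}$ is connected and meets the cycle strings only at $x_a\in(\text{string }u_a)$; define $x_b,K_b$ symmetrically. Now suppose one of the three conclusions fails: $a$ does not cross $z_{u_a}$, or $b$ does not cross $z_{u_b}$, or the unique point $y=a\cap b$ (which is not on $J(C)$, as no three strings meet at a point) lies in $\mathcal A$ rather than in $\operatorname{int}J(C)$. A short check shows $y\in K_a\cap K_b$ in each case: the only points excluded from $K_a$ or $K_b$ are interiors of inner arcs (which lie in $\operatorname{int}J(C)$, hence excluded) and the points $x_a,x_b$, and $y$ cannot equal $x_a$ or $x_b$ because $b$, resp.\ $a$, meets no cycle string other than $u_b$, resp.\ $u_a$. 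Then $K_a\cup K_b$ is connected, contained in $\overline{\mathcal A}$, joins $x_a\in(\text{string }u_a)$ to $x_b\in(\text{string }u_b)$, and is disjoint from every cycle string $i\notin\{u_a,u_b\}$, in particular from $W_1\cup W_2$; but $x_a$ lies in the same component of $\overline{\mathcal A}\setminus(W_1\cup W_2)$ as $z_{u_a}$ and $x_b$ in the same component as $z_{u_b}$, and those components are different, contradicting connectedness. Hence $a$ crosses $z_{u_a}$, $b$ crosses $z_{u_b}$, and $y\in\operatorname{int}J(C)$.

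I expect the main obstacle to be purely the planar-topological bookkeeping behind the middle two steps: checking that the base-point tails cut $\mathcal A$ into regions so that ``no two consecutive tail-free junctions'' genuinely produces the separating tails $W_1,W_2$ and that these are disjoint from the strings $u_a,u_b$; dealing cleanly with the harmless degeneracies (a cycle string's second tail may dangle into $\mathcal A$, and two tails may share a junction); and confirming that $\dist(u_a,u_b)\ge 4$ is exactly the bound that guarantees a strictly interior junction carrying a forbidden tail on the short side. None of these steps is deep, but each needs a careful case analysis.
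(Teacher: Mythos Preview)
Your argument is correct and rests on the same geometric picture as the paper's: the Jordan curve $J(C)$ cuts the disk into an inner region and an annulus, and base-point tails of cycle strings chop the annulus into sectors that keep $a$ on the $u_a$-side and $b$ on the $u_b$-side. The organization, however, differs. The paper is more constructive: for each $i$ it explicitly takes the base-point tails $\start(i-2)$ and $\start(i+2)$, together with arcs of $J(C)$ and of \gcirc, to enclose a region $R(i)$; it then records three facts (string $i$ lives in $R(i)\cup\operatorname{int}J(C)$; $R(i)$ and $R(j)$ are interior-disjoint once $\dist(i,j)\ge 4$; any outside vertex adjacent only to $i$ must have its base point in $R(i)$) and reads off the conclusion. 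Your version instead proves the existence of suitable separating tails $W_1,W_2$ by the pigeonhole observation that no two consecutive junctions can be tail-free, and then runs a contradiction argument on $K_a\cup K_b$. What the paper's route buys is brevity and a reusable family of regions $R(i)$ indexed by cycle vertices; what your route buys is a cleaner explanation of exactly why the threshold $\dist\ge 4$ is the right one (two strictly interior consecutive junctions on the short arc) and a more explicit handling of the topological bookkeeping that the paper leaves to the reader. One small point: your justification that $W_1,W_2$ are disjoint from the strings $u_a,u_b$ should be stated as ``$W_1\subseteq\text{string }i_1$ and string $i_1$ meets string $u_a$ only at the non-interior junction $p_{i_1,u_a}$ (if at all), which is not on $W_1$''; the phrase ``attach strictly between'' alone does not quite carry that weight.
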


\begin{figure}[htbp]
  \centering
  \includegraphics{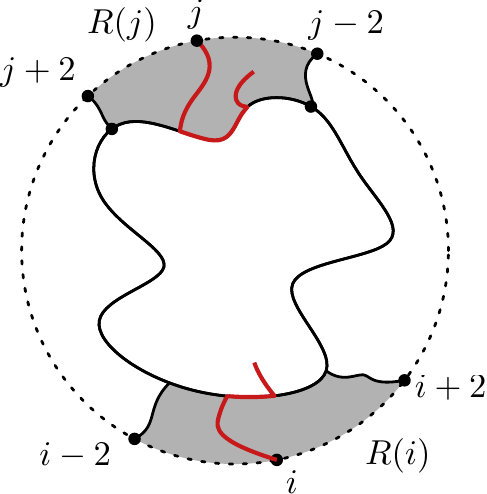}
  \caption{Illustration of Lemma~\ref{lem:InnerIntersection}.}
  \label{fig:CycleRegions}
\end{figure}

\begin{proof}
  Let $i\in V(C)$ be some outer $1$-string.
  We define by $\start(i)$ the portion of $i$ between its base point and the first point on the central part, which we denote by $p_i$. 
  Given three distinct points $p,q,r$ on $J(C)$, we denote by $\path(p,q,r)$ the portion of $J(C)$ bounded by $p$ and $q$ and containing $r$.
  Similarly, let $p,q,r$ be three distinct points on the grounding circle.
  Then there exists a unique portion $\circle(p,q,r)$ of the grounding circle bounded by $p$ and $q$ and containing $r$.
  For each $i\in V(C)$, we consider the region $R(i)$ bounded by the following four curves: 
    \[
     \start(i-2)\ ,\ 
     \path(p_{i-2},p_{i+2},p_i) \ ,\ 
     \start(i+2) \ ,\ 
     \circle(i-2,i+2,i)
    \]
    We summarize a few useful facts on these regions.
    \begin{enumerate}
     \item String $i$ is contained in the union of the region $R(i)$ and the interior of $J(C)$.
     \item If $\dist(i,j) \geq 4$ then $R(i)$ and $R(j)$ are interior disjoint and $i \cap R_j = \emptyset$ .
     \item If $v\notin V(C)$ is adjacent to $i\in V(C)$ but not adjacent to any other $j\in V(C)$, then the base point of $v$ must be inside $R(i)$.
    \end{enumerate}
    The first statement follows from the fact that $i$ is disjoint from $i-2$ and $i+2$.    
    The second statement follows immediately from the definition of $R(i)$.
    The last statement follows from the fact that there is no way to reach the central part or $R(i)$ in case that $v$ does not start in $R(i)$.
    From here follows the proof. 
    The outer $1$-string $a$ must have its base point in $R(u_a)$ and $b$ must have its base point in $R(u_b)$.
    These two regions are interior disjoint and don't have a common boundary formed by any part of $u_a$ or $u_b$.
    Hence, as $a$ and $b$ are not allowed to cross any other string of $C$, they can only intersect in the interior of $J(C)$.
\end{proof}
%

%
We are now ready to prove our main lemma.

\begin{figure}[htbp]
  \centering
  \includegraphics{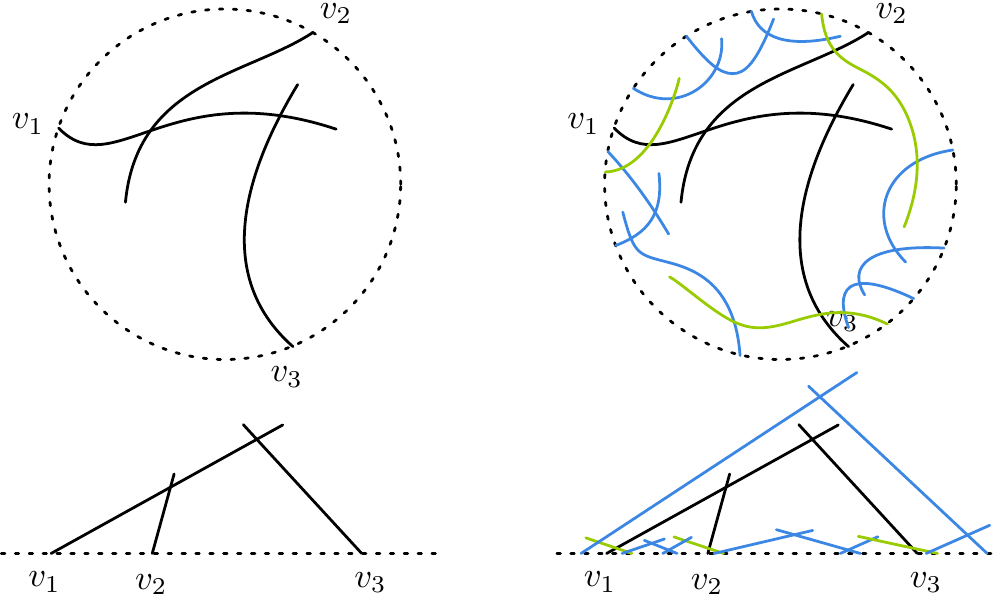}
  \caption{
		Representations of order forcing graphs for ordered representations of outer string graphs and grounded segment graphs.}
  \label{fig:AddingCycles}
\end{figure}

\begin{proof}[Proof of Lemma~\ref{lem:cycle}]
  $(\Rightarrow)$ Let $R$ be an ordered representation of $G$ with respect to $\pi$. We have to construct a representation $R^\pi$ of the graph $G^\pi$. 
  
  The two cases of outer $1$-string graphs and grounded segment graphs follow the same simple pattern; 
  see \Cref{fig:AddingCycles} for a pictoral proof of those cases. 
  Here, we directly proceed to showing the case of outer segment graphs, as it is more involved.
  Let $R$ be an ordered outer segment representation of $G$ with respect to $\pi$ with grounding circle $\gcirc$. 
  Modify $R$ such that each segment stops at its last intersection point.
  In order to construct $R^\pi$, we will add the segments representing the cycle vertices to $R$. 
  As a first step, we add to each segment $s$ of $R$ a tiny segment $t_s$ orthogonal to $s$, intersecting it and being very close to it. These tiny segments are small enough so that they intersect no other segment, but are on both sides of $s$.
  Let $u,v \in V(G)$ 
  be two segments that are successive in the order $\pi$, that is, the base points $u$ and $v$ are consecutive on the grounding circle $\gcirc$.
%
  
%
  We denote by $E$ the region inside $\gcirc$ and outside the convex hull of all the segments; 
  see the illustration in the middle of \Cref{fig:ForcedOrderOuterSegment}. 
  Note that $E$ has at most $(n-2)/2$ reflex vertices $r_1,\ldots,r_k$.
  We extend $k$ rays from the center of $\gcirc$ through $r_1,\ldots,r_k$. This divides $E$ in at most $k+2 < n-2$ regions.
  Recall that we assume $n\geq 4$.
  It is easy to see that we can place one grounded segment into each region such that they form the desired path from $u$ to $v$ without intersecting any other segment from $R$.
  In order to obtain enough segments on the path, it might be necessary to use several segments inside one region. 
  
\begin{figure}[htbp]
  \centering
  \includegraphics{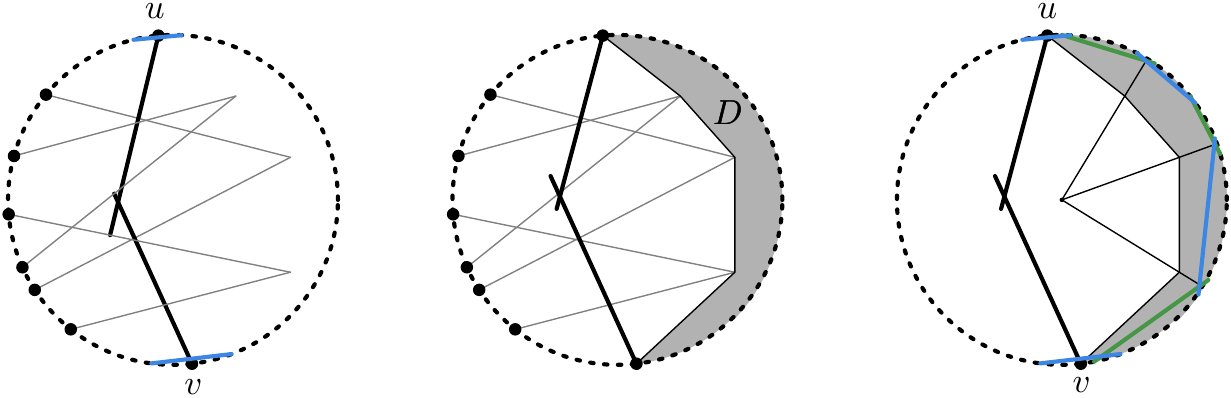}
  \caption{Representation of $G^\pi$ as outer segment graph.}
  \label{fig:ForcedOrderOuterSegment}
\end{figure}

  Now we show the statement for ray graphs; cf.~\Cref{fig:AddingCycleRays}. We start with a representation $R$
  of our ordered ray graph of $G$ with respect to $\pi$.
  Let $D$ be a sufficiently large disk that contains all ray starting points
  as well as all intersections among all the rays and let $\partial D$ be the
  boundary of $D$.
  For each ray $r$ we define $\ell_r$ to be the line orthogonal to
  $r$ through the unique point $\partial D \cap r $.
  Note that $\ell_r$ is usually not tangent to $\partial D$.
  We choose $D$ large enough such that the collection of all lines $\ell_r$ defined in this way determine a \emph{convex} polygon $P$ in which each $\ell_r$ is the supporting line of an edge of $P$. 
  The \emph{convex} polygon $Q$ is defined by adding $n$ sufficiently small edges in the vicinity of each vertex of $P$.
  (Recall that the distance between two consecutive vertices on the cycle is $2n$.)
  Now denote with $v_1,\ldots,v_{k}$ the $n + n^2$ orthogonal vectors of the edges $e_1,\ldots,e_{k}$ of $Q$ in clockwise order. 
  We place at each edge $e_i$ two rays $q_i$ and $r_i$ with slopes $v_{i+1}$ and $v_i$ so that $q_i$ and $r_i$ intersect (their apices being close to the endpoints of the edges and not on any ray of $R$).  
  This is illustrated with a regular $k$-gon at the bottom right of \Cref{fig:AddingCycleRays}. 
  It is easy to see that the intersection graph of these rays is a cycle, and that each ray of $R$ intersects exactly one of the new rays.
  The representation $R^{\pi}$ of $G^\pi$ is the union of the rays of $R$ and the newly defined rays. 
  
  ($\Leftarrow$) 
  Recall that we have to show the following. If $G^\pi$ has representation, then $G$ also has a $\pi$-ordered representation. We show this by considering a representation $R^\pi$ of $G^\pi$. 
  It is clear that $R^\pi$ restricted to the relevant vertices gives a representation of $G$.
  We will show that the vertices $V(G)$ are $\pi$-ordered.
  
  We will restrict considerations to outer $1$-string graphs for notational convenience, as the proof is essentially the same for each graph class.
  (Recall that all of these graph classes are contained in the class of outer $1$-string graphs. Further, we can produce an outer $1$-string representation from a ray, grounded segment or outer segment representation.)
%
  By Lemma~\ref{lem:InnerIntersection}, each relevant outer $1$-string adjacent to the circle vertex $i$ is fully contained in the region $R(i)$, as described in the proof of Lemma~\ref{lem:InnerIntersection}. As all regions $R(2n),R(4n),R(6n), \ldots $ are pairwise disjoint and arranged in this order on the grounding circle, this order is also enforced on the $1$-strings of $V(G)$. 
\end{proof}

\begin{figure}[htbp]
  \centering
  \includegraphics{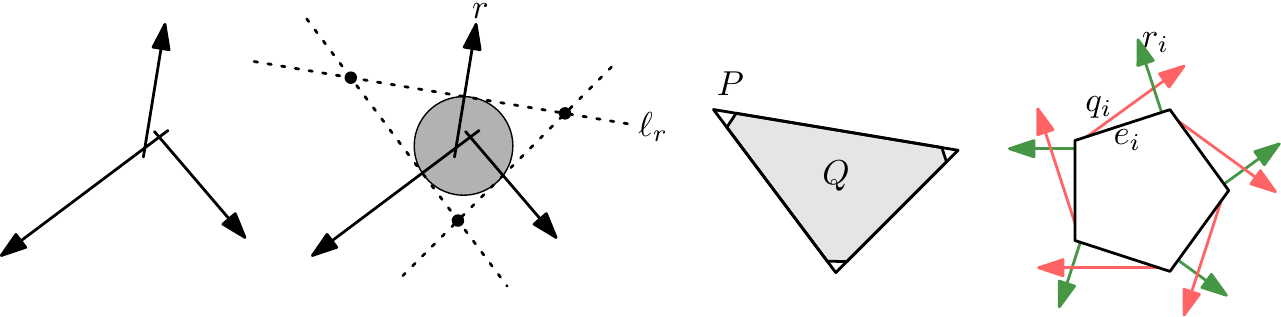}
  \caption{Illustration of the proof of Lemma~\ref{lem:cycle} for rays.
}
  \label{fig:AddingCycleRays}
\end{figure}

\section{Stretchability}\label{sec:stretch}

The main purpose of this section is to show that the recognition
of the graph classes defined above is $\exists\R$-complete. 
For this we will use Lemma~\ref{lem:cycle} extensively.
It is likely that our techniques can be applied to other graph classes as well.

\Complexity*

\begin{proof}
  By Lemma~\ref{lem:cycle}, we can choose a permutation $\pi$ of the vertices and restrict the representations to be $\pi$-ordered.
  Further, it is sufficient to show hardness for the stretchability problems, as the problems can only become easier with additional information.
  
  We first show $\exists\R$-membership. 
  Note that each of the straight-line objects we consider can be represented with at most four variables:
  For segments, we use two variables for each endpoint. 
  For rays, we use two variables for the apex and two variables for the direction. 
  The condition that two objects intersect can be formulated with constant-degree polynomials in those variables. 
  Hence, each of the problems can be formulated as a sentence in the first-order theory of the reals of the desired form.

  Let us now turn our attention to the $\exists\R$-hardness.
  We will reduce from stretchability of pseudoline arrangements. 
  Given a be a pseudoline arrangement $\L$, we will construct a graph
  $G_\L$  and a permutation $\pi$ such that:
  \begin{enumerate}
   \item \label{itm:Dir1}
   If $\L$ is stretchable then $G_\L$ has a $\pi$-ordered representation with \emph{grounded segments}.
   \item \label{itm:Dir2}
	   If $\L$ is not stretchable then there does not exist a $\pi$-ordered representation of $G_\L$ as an \emph{outer segment graph}.\\
  \end{enumerate}
  Recall that we know the following relations for the considered graph classes. 
  \[\textrm{grounded segments}
  \quad \subseteq \quad  
  \textrm{rays} 
  \quad  \subseteq \quad 
  \textrm{outer segments}.\]
  Thus, Item~\ref{itm:Dir1} implies that $G_\L$ has a $\pi$-ordered representation with rays or outer segments.
  Furthermore, Item~\ref{itm:Dir2} implies that $G_\L$ has no $\pi$-ordered representation with rays or grounded segments.

  We start with the construction of $G_\L$ and $\pi$.
  Let \L be an arrangement of $n$ pseudolines.
  Recall that we can represent \L by $x$-monotone curves.
  Let $\gline_1$ and $\gline_2$ be two vertical lines such that all the intersections of \L lie between $\gline_1$ and $\gline_2$.
  We cut away the part outside the strip bounded by $\gline_1$ and $\gline_2$.
  This gives us a $\pi$-ordered grounded $1$-string representation $R_\L$ with respect to the grounding line $\gline_1$.

  \begin{figure}[htbp]
  \centering
  \includegraphics{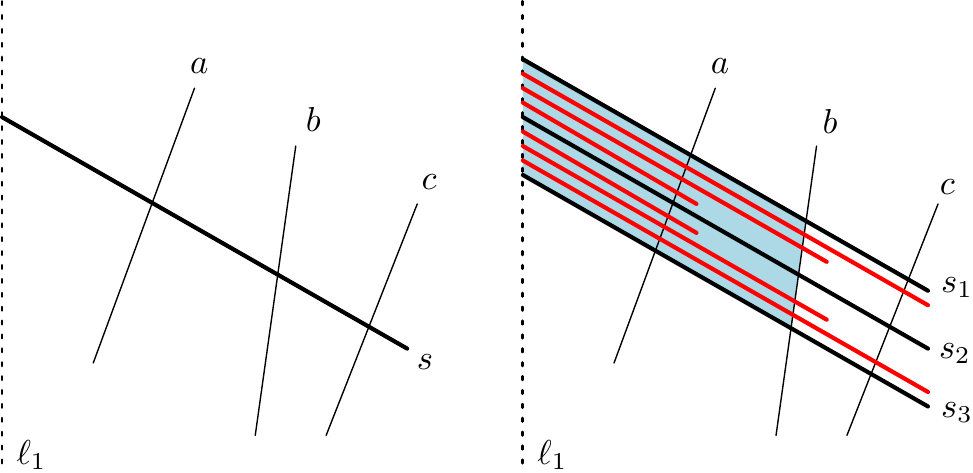}
  \caption{Illustration of \Cref{thm:RecHardness}: Construction of $G_\L$ and its grounded 1-string representation.}
  \label{fig:Stretchability}
\end{figure}
  
  Now we replace each string $s$ representing a pseudoline in \L by the following construction (extending $\pi$ accordingly):
  We split $s$ into three similar copies $s_1,s_2,s_3$, shifted vertically by an offset that is chosen sufficiently small so that the three copies intersect the other pseudolines (and their shifted copies) in the same order.
  For each successive intersection point of $s$ with a pseudoline $s'$ in \L, we add a pair of strings grounded on either side of the base point of $s_2$ and between the base points of $s_1$ and~$s_3$, intersecting none of $s_1$, $s_2$ and~$s_3$.
  The two strings intersect all the pseudolines of \L that $s$ intersects, up to and including $s'$, in the same order as $s$ does. 
  All the strings for $s$ are pairwise nonintersecting and nested around $s_2$; see Figure~\ref{fig:Stretchability}.
  We refer to these pairs of strings as {\em probes}. 
  The probes are meant to enforce the order of the intersections in all $\pi$-ordered representations.

  We now prove Item~\ref{itm:Dir1}.
  We suppose there is a straight line representation of \L, which we denote by \K.
  Again let $\gline_1$ and $\gline_2$ be two vertical lines such that all intersections of \K are contained in the vertical strip between them.
  This gives us a collection of grounded segments $R_\K$. 
  One can check that the above construction involving probes can be implemented using straight line segments, just as illustrated in Figure~\ref{fig:Stretchability}. 
  Thus, $R_\K$ is a $\pi$-ordered grounded segment representation of $G_\L$, as claimed. 
  
  Next, we turn our attention to Item~\ref{itm:Dir2} and suppose that \L is not stretchable.
  Let us further suppose, for the purpose of contradiction, that we have a $\pi$-ordered outer segment representation of $G_\L$.
  We show that keeping only the middle copy $s_2$ of each segment $s$ representing a pseudoline of \L in our construction, we obtain a realization of \L with straight lines.
  For this, we need to prove that the construction of the probes indeed forces the order of the intersections.
  We consider each such segment $s_2$ and orient it from its base point to its other endpoint.
  Now suppose that there exist strings $a$ and $b$ such that the order of intersections of $s_2$ with $a$ and $b$ with respect to this orientation does not agree with that of the pseudoline arrangement.
  (In Figure~\ref{fig:Stretchability}, suppose that $s_2$ crosses the lines $b$ before $a$ in the left-to-right order.)
  We consider the convex region bounded by the arc of the grounding circle between the base points of $s_1$ and $s_3$, and segments from $s_1$, $b$, and $s_3$.
  This convex region is split into two convex boxes by $s_2$. 
  The pair of probes corresponding to the intersection of $s_2$ and $a$ is completely contained in this region, with one probe in each box.
  But now the line $a$ must enter both boxes, thereby intersecting $s_2$ on the left of $b$ with respect to the chosen orientation, a contradiction.
  Therefore, the order of the intersections is preserved, and the collection of segments~$s_2$ is a straight line realization of \L, a contradiction to the assumption that \L is not stretchable.
\end{proof}
  
As there exist pseudoline arrangements, which are not stretchable, we conclude that 
outer segment graphs are a \emph{proper} subclass of outer string graphs.

\section{Rays and Segments} \label{sec:raySeg}

\begin{figure}[htbp]
  \centering
  \includegraphics{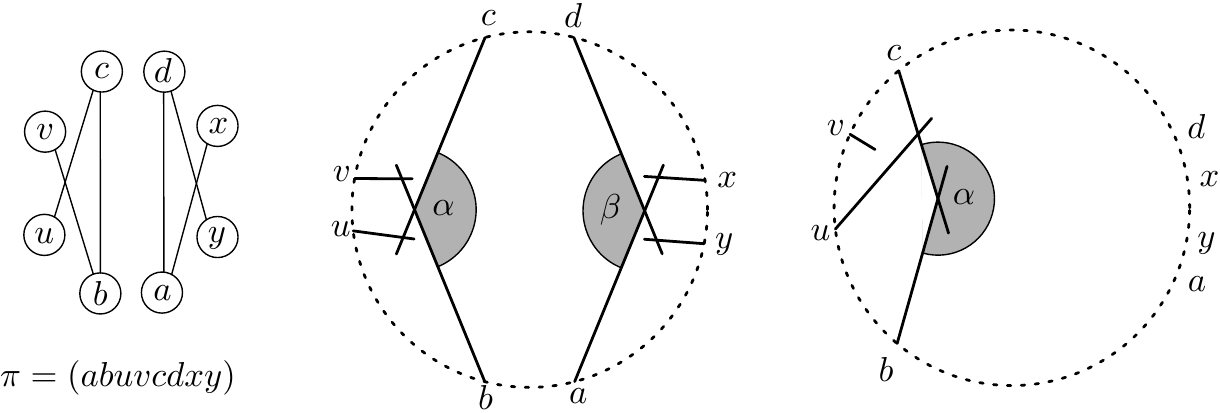}
  \caption{Illustration of \Cref{thm:RayNotOuter}. On the left is a graph $G$ together with a permutation $\pi$ of the vertices displayed. In the middle is a $\pi$-ordered outer segment representation of $G$. The right drawing illustrates that the angles $\alpha$ and $\beta$ must each be at most $180^\circ$.}
  \label{fig:RayNotOuter}
\end{figure}
\begin{theorem}[Rays $\subsetneq$ Outer Segments]\label{thm:RayNotOuter}
  There are graphs that admit a representation as outer segment graphs but not as ray graphs.
\end{theorem}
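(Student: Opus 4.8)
The plan is to exhibit a concrete graph $G$ together with a permutation $\pi$ of its vertices such that $G$ admits a $\pi$-ordered outer segment representation but admits \emph{no} $\pi$-ordered ray representation. Once this is done, \Cref{lem:cycle} finishes the proof at once: the order forcing graph $G^{\pi}$ is then an outer segment graph (because $G$ has a $\pi$-ordered outer segment representation) but is not a ray graph (because $G$ has no $\pi$-ordered ray representation), so $G^{\pi}$ is the desired separating example. The whole construction is driven by the observation that a $\pi$-ordered ray representation is very rigid -- the rays, sorted by the angle their direction makes with the horizontal axis, must appear in the cyclic order $\pi$ -- whereas in an outer segment representation each object carries an \emph{extra} degree of freedom, namely the position of its base point on the grounding circle \emph{in addition to} its direction; the graph $G$ will be tailored to abuse exactly this extra freedom.

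For the positive direction it suffices to draw the representation by hand, placing the base points on the grounding circle in the order prescribed by $\pi$ and routing the segments inward so that precisely the edges of $G$ become crossings; see the middle drawing of \Cref{fig:RayNotOuter}. The relevant design features of $G$ are: a small set of distinguished vertices whose segments, together with a handful of auxiliary vertices, partition the remaining vertices into two ``blocks'' $A$ and $B$ lying on opposite parts of the circle; and two further \emph{probe} vertices $p$ and $q$ forced to be adjacent to all of $A$ and all of $B$ respectively, while $A$ and $B$ themselves are (essentially) independent sets. With base points free to roam the circle, all of this is easy to realize, so $G$ has a $\pi$-ordered outer segment representation.

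For the negative direction, assume toward a contradiction that a $\pi$-ordered ray representation of $G$ exists, and let $\theta_x$ denote the direction of the ray $r_x$ representing the vertex $x$; by $\pi$-orderedness the $\theta_x$ occur in cyclic order $\pi$ on the direction circle. The heart of the argument is the geometric fact illustrated on the right of \Cref{fig:RayNotOuter}: \emph{if a single ray crosses a family of pairwise non-crossing rays, then the directions of that family lie in an arc of the direction circle of width at most $180^{\circ}$}. Since $r_p$ crosses every ray of the (pairwise non-crossing) block $A$, the directions $\{\theta_x : x\in A\}$ span an arc of width $\alpha\le 180^{\circ}$, and the cyclic order $\pi$ pins this arc down to stretch between two of the distinguished directions one way around; symmetrically $r_q$ forces the directions of $B$ into an arc of width $\beta\le 180^{\circ}$ stretching between two distinguished directions the other way around. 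The vertices of $G$ and the permutation $\pi$ are chosen precisely so that these two arcs are forced to cover the entire direction circle and, moreover, to overlap in an arc of positive width; hence $\alpha+\beta>360^{\circ}$, contradicting $\alpha,\beta\le 180^{\circ}$. (In the borderline arrangement where the overlap degenerates to a point one gets $\alpha=\beta=180^{\circ}$, which makes the crossing of $r_p$ with a distinguished ray happen ``at infinity'' -- impossible in a genuine ray representation -- so the contradiction persists.) Therefore no $\pi$-ordered ray representation exists.

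I expect the main obstacle to be the geometric fact about one ray crossing a family of pairwise non-crossing rays. My plan for it is to orient the crossing ray $w$ away from its apex, list the crossed rays $r_{i_1},\dots,r_{i_k}$ in the order in which $w$ meets them, and record for each $r_{i_j}$ on which side of $w$ its apex lies; using that two consecutively crossed rays are disjoint -- so the piece of $w$ between their crossings separates them -- one shows that the directions $\theta_{i_1},\dots,\theta_{i_k}$ turn monotonically as one walks along $w$, and by a total of strictly less than a half-turn, which is exactly the claimed bound, with the antiparallel borderline excluded because it would force a crossing point at infinity. The remaining effort is bookkeeping: fixing how many distinguished, auxiliary, and probe vertices to use and how they interleave with $A$ and $B$ in $\pi$ so that the two arcs \emph{provably} overlap rather than merely touch, and checking the hand-drawn outer segment representation -- both routine once the angular lemma is available.
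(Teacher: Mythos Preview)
Your high-level strategy matches the paper's exactly: build a small graph $G$ and a permutation $\pi$ so that $G$ has a $\pi$-ordered outer segment representation but no $\pi$-ordered ray representation, and then invoke the Cycle Lemma to obtain the separating graph $G^{\pi}$. Where you diverge is in the core obstruction argument, and there your plan has a genuine gap.

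The paper does \emph{not} argue via arc widths in direction space. It uses a concrete $8$-vertex graph with vertices $a,b,c,d,u,v,x,y$ and takes $\alpha$ and $\beta$ to be the \emph{geometric angles at two specific crossing points} (the $b$--$c$ crossing and the $a$--$d$ crossing) inside the grounding disk. A two-line blocking argument (if $\alpha\ge 180^{\circ}$ then $u$ hitting $c$ prevents $v$ from reaching $b$, since $u$ and $v$ are non-adjacent) forces $\alpha,\beta<180^{\circ}$ in \emph{every} $\pi$-ordered outer segment representation. Elementary Euclidean geometry then implies that one of the two pairs of line extensions ($a,b$ or $c,d$) must meet \emph{outside} the grounding circle; this contradicts the Ray Characterization (\Cref{lem:Raycharacterization}), which says a ray graph always admits an outer segment representation in which all line extensions meet inside the circle. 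No statement about directions spanning at most a half-turn is needed.

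In your plan, by contrast, $\alpha$ and $\beta$ are the angular widths of the direction-arcs spanned by the blocks $A$ and $B$, and the contradiction is supposed to come from $\alpha+\beta>360^{\circ}$. The problem is the step ``the vertices of $G$ and the permutation $\pi$ are chosen precisely so that these two arcs are forced to cover the entire direction circle and, moreover, to overlap in an arc of positive width.'' A $\pi$-ordered ray representation fixes only the \emph{cyclic order} of the directions, not any angular distances. So even if $A$ and $B$ are interleaved in $\pi$ with distinguished vertices marking their boundaries, the representation is free to make all gaps between consecutive directions tiny except one, collapsing both the $A$-arc and the $B$-arc to arbitrarily small width; nothing forces their union to cover the circle. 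Your angular lemma (one ray crossing a pairwise-disjoint family forces the family's directions into a half-circle) may well be true, but it only gives an \emph{upper} bound on each arc; you need a \emph{lower} bound on their union, and the sketch supplies none. Without an additional mechanism that ties angular widths to combinatorial data (the paper achieves this via actual crossing-point angles and \Cref{lem:Raycharacterization}), the contradiction does not go through.
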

\begin{proof}
    Consider the graph $G$ and a permutation $\pi$ as displayed in \Cref{fig:RayNotOuter}.
    We will show that $G$ has a $\pi$-ordered representation as an outer segment graph, but not as a ray graph. 
    This implies that $G^\pi$ has a representation as an outer segment graph, but not as a ray graph as well, see Lemma~\ref{lem:cycle}. 
%
%

    Given any $\pi$-ordered representation of $G$, we define the angle $\alpha$ as the angle at the intersection of $b$ and $c$ towards the segments $d,x,y,a$ and we define the angle $\beta$ as the angle at the intersection of $a$ and $d$ towards the segments $b,u,v,c$, as can be seen in \Cref{fig:RayNotOuter}.
    
    We show that both $\alpha$ and $\beta$ are smaller than $180^\circ$ in any outer segment representation.
    As the two cases are symmetric we show it only for $\alpha$.
    Assume $\alpha \geq 180^\circ$ as on the right of \Cref{fig:RayNotOuter}.
    If $u$ intersects $c$ (as shown in the figure) then it blocks $v$ from intersecting $b$, as $v$ must not intersect $u$.
    Likewise, $v$ intersecting $b$ would block $u$ from intersecting $c$.
    This shows $\alpha,\beta < 180^\circ$.
    
    As the angles  are smaller than $180^\circ$, we conclude that either the extensions of $a$ and~$b$ or the extensions of $c$ and $d$ must meet outside of the grounding circle. 
    Recall that we considered any representation of $G$. 
    By Lemma~\ref{lem:Raycharacterization} it holds for every ray graph that there exists at least one representation of $G$ with outer segments such that \emph{all} extensions meet \emph{within} the grounding circle. 
    (The lemma also holds for ordered representations.)
    Thus there cannot be a $\pi$-ordered ray representation of $G$.
\end{proof}

\begin{figure}[htbp]
  \centering
  \includegraphics{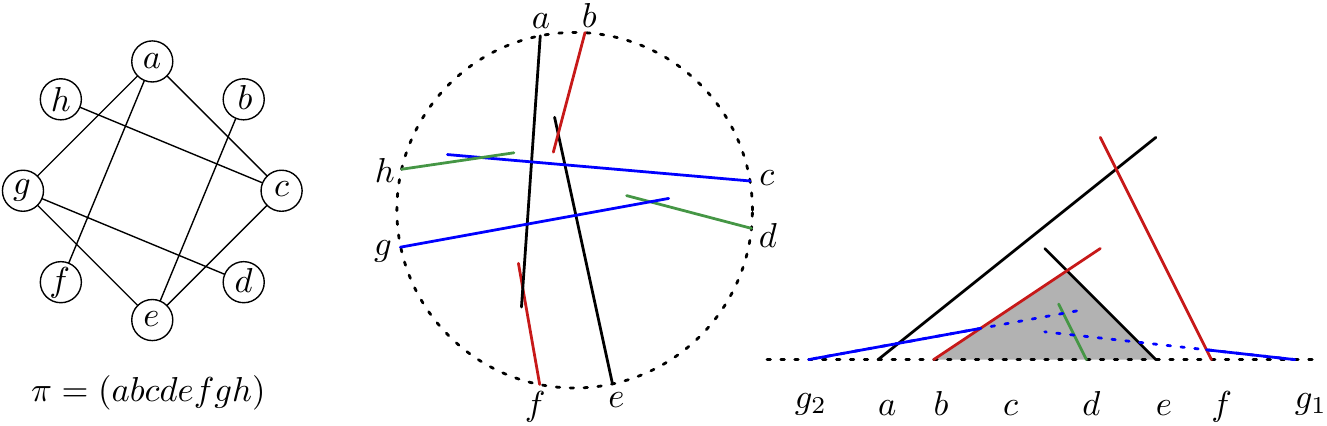}
  \caption{Illustration of \Cref{thm:RayNotDownwardRay}: A graph $G$ together with a permutation $\pi$ of the vertices (left); A $\pi$-ordered outer segment representation of $G$ (middle); The segment $g$ cannot enter the gray triangle without intersecting $b$ or $f$ (right).}
  \label{fig:RayNotDownwardRay}
\end{figure}

\begin{theorem}[Downward Rays $\subsetneq$ Rays]\label{thm:RayNotDownwardRay}
  There are graphs that admit a representation as ray graphs but not as downward ray graphs.
\end{theorem}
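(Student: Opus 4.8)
The plan is to exhibit a concrete graph $G$ together with a permutation $\pi$, give it by means of a $\pi$-ordered outer segment representation that is easily seen to be realizable by rays, and then argue that no downward ray representation of $G$ (hence of $G^\pi$) can exist. As with \Cref{thm:RayNotOuter}, the Cycle Lemma (\Cref{lem:cycle}) lets us pass freely between ``$\pi$-ordered representation of $G$'' and ``representation of $G^\pi$'', so it suffices to reason about $\pi$-ordered representations of the small graph $G$ shown in \Cref{fig:RayNotDownwardRay}. The graph is built from a handful of vertices $a,b,\dots,g$ (plus probes, as in the stretchability construction, to pin down the cyclic order of base points), designed so that its outer segment picture in the middle of the figure visibly admits a ray realization: all relevant line extensions can be made to cross inside the grounding circle, so \Cref{lem:Raycharacterization} gives a ray representation.

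Next I would set up the obstruction to a downward ray representation. Recall that in a downward ray representation, ordering rays by the angle their supporting line makes with the horizontal axis is exactly the ordering along the line at infinity (equivalently, by \Cref{lem:DRayEqualGSegment}, these are grounded segments with base points on a horizontal grounding line, and the order is the order of base points). So I would instead argue in the grounded segment model: suppose $G$ has a $\pi$-ordered grounded segment representation. The probes force the base points of $a,b,\dots,g$ to occur in the prescribed order along the grounding line, and they also force certain intersection orders along individual segments, just as in the proof of \Cref{thm:RecHardness}. The point of the construction is that in the grounded (equivalently downward ray) setting there is a parity/sidedness constraint absent from the general ray setting: a grounded segment, together with the grounding line, bounds a region, and a segment whose base point lies ``outside'' a certain triangular region cannot reach into it without crossing one of the segments bounding it. Concretely, following the right-hand picture in \Cref{fig:RayNotDownwardRay}, one shows that $g$ would have to enter the gray triangle cut out by $b$, $f$ and (a portion of) the grounding line in order to realize its required adjacencies, but doing so forces an intersection with $b$ or $f$ that $\pi$ and the probe structure forbid. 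This contradiction shows $G$ has no $\pi$-ordered downward ray representation.

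To finish, I would observe that the triangle argument really only uses the one-sidedness that grounded segments inherit from the grounding line (and hence that downward rays inherit from the line at infinity), and crucially \emph{not} in the ray model, where a ray may pass on either side of the point where two other lines cross because there is no distinguished ``ground''. That asymmetry is exactly what separates the two classes: the same configuration that is blocked in the grounded/downward picture is freely realizable once the extensions are allowed to meet inside the circle. Combining: $G$ admits a ray representation but no downward ray representation; applying \Cref{lem:cycle} to $G$ and $\pi$ transfers this to $G^\pi$, and we are done.

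The main obstacle is the combinatorial bookkeeping in the second step: one has to verify that the probes genuinely enforce both the cyclic order of the base points of $a,\dots,g$ \emph{and} the per-segment intersection orders needed to make the ``$g$ cannot enter the gray triangle'' argument airtight, i.e.\ that there is no clever rerouting of $g$ (or of some probe) that evades the crossing. This is the analogue of the probe-forcing argument in the proof of \Cref{thm:RecHardness}, specialized to this particular gadget, and checking it requires care with the case analysis of how $g$ could otherwise approach its required neighbors; everything else (the ray realization via \Cref{lem:Raycharacterization}, the transfer via \Cref{lem:cycle}, the equivalence with grounded segments via \Cref{lem:DRayEqualGSegment}) is routine given the earlier lemmas.
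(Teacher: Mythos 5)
Your high-level route coincides with the paper's: fix the gadget $G$ of \Cref{fig:RayNotDownwardRay} together with $\pi$, use \Cref{lem:cycle} to pass between $\pi$-ordered representations of $G$ and representations of $G^\pi$, observe that the outer segment picture extends to a ray representation, and derive the contradiction in the grounded segment model via \Cref{lem:DRayEqualGSegment}. The problem is that the step carrying the whole theorem --- that no $\pi$-ordered grounded segment representation of $G$ exists --- is only asserted: you write ``one shows that $g$ would have to enter the gray triangle \dots'' and then explicitly defer the verification that no rerouting of $g$ (or of a probe) evades the crossing, calling it the main obstacle. As written, the proposal therefore has a genuine gap exactly where the mathematical content lies.

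Moreover, the machinery you propose for closing that gap points in the wrong direction. You invoke probes (as in the proof of \Cref{thm:RecHardness}) to pin down the cyclic order of the base points and per-segment intersection orders; but the base-point order is already enforced by the Cycle Lemma alone, and the argument needs no intersection orders along segments at all. It is a short nested-triangle argument using only adjacencies and non-adjacencies of $G$: assuming (by the rotational symmetry of $G$) that the base points of $a,b,c,d,e,f$ appear in this left-to-right order on $\gline$, the intersecting pair $a,f$ bounds with $\gline$ a triangle $\Delta_{af}$; since $b$ and $e$ avoid $a$ and $f$ and are based between them, $b$, $e$ and hence $\Delta_{be}$ lie inside $\Delta_{af}$; since $d$ avoids $b$ and $e$ and is based between them, $d$ lies inside $\Delta_{be}$; and $g$, based left of $a$ or right of $f$, must intersect $d$ and hence enter this region, yet is not adjacent to $b$ or $f$, a contradiction. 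What forbids the crossings is plain non-adjacency in $G$, not any ``probe structure''; adding probes only obscures this and reintroduces the bookkeeping you rightly worry about. With the probes dropped and the containment chain spelled out, your outline becomes the paper's proof.
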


\begin{proof}
	Consider the graph $G$ and the permutation $\pi$ as displayed in \Cref{fig:RayNotDownwardRay}~(left).
	Clearly, $G$ has a $\pi$-ordered representation as a ray graph, as can be seen from the outer segment reperesentation of $G$ in shown in \Cref{fig:RayNotDownwardRay}~(middle).
    We will show that $G$ does not have a $\pi$-ordered representation as a grounded segment graph. 
	Hence $G^\pi$ has a representation as a ray graph, but not as a grounded segment graph or a downward ray graph; see Lemma~\ref{lem:cycle} and Lemma~\ref{lem:DRayEqualGSegment}. 
%
%
	
	Assume for the sake of contradiction that $G$ has a $\pi$-ordered representation $R^\pi$ as grounded segment graph.
	As $G$ is rotation symmetric, we can assume without loss of generality that in~$R^\pi$ the base points of the segments $a$, $b$, $c$, $d$, $e$, and $f$ are sorted from left to right in this order along $\gline$; cf.\ \Cref{fig:RayNotDownwardRay}~(right).
	Consider first the segments $a$, $f$, $b$ and~$e$ in~$R^\pi$.  As $a$ and $f$ intersect, they form a triangle~$\Delta_{af}$ together with $\gline$. 
	An according statement holds for $b$ and $e$ with triangle~$\Delta_{be}$. Moreover, as none of $a$ and $f$ intersects $b$ or $e$, and as the base points of $b$ and $e$ lie between the base points of $a$ and $f$, the triangle $\Delta_{be}$, as well as the whole segments $b$ and $e$ lie completely inside $\Delta_{af}$.
    Now consider the segment $d$, which has its base point between $b$ and $e$. As $d$ does not intersect any of $b$ and $e$, $d$ lies completely inside $\Delta_{be}$.
	Finally, consider the segment $g$ which has its base point either to the left of $a$ or to the right of $f$.
	The two possibilities are indicated with $g_1$ and $g_2$ in \Cref{fig:RayNotDownwardRay}~(right).
	On the one hand, $g$ must intersect $d$ and hence enter the triangle $\Delta_{be}$. On the other hand, $g$ is not allowed to intersect any of $b$ and $f$, a contradiction.
\end{proof}

\section*{Acknowledgments}

This work was initiated during the Order \& Geometry Workshop organized by Piotr Micek and the second author at the Gułtowy Palace near Poznań, Poland, on September 14-17, 2016. We thank the organizers and attendees, who contributed to an excellent work atmosphere. Some of the problems tackled in this paper were brought to our attention during the workshop by Michał Lasoń. The first author also thanks Sergio Cabello for insightful discussions on these topics.

\bibliographystyle{abbrv}
\bibliography{main}
%
%

\end{document}